\documentclass[11pt,leqno,fleqn]{article}

\usepackage{tikz}
\usetikzlibrary{automata,arrows,shapes,snakes,positioning}
% {decorations.markings,automata,arrows,shapes,snakes,backgrounds}
\tikzset{>=latex}
\tikzset{->-/.style={decoration={
  markings,
  mark=at position #1 with {\arrow{>}}},postaction={decorate}}}

\usepackage{latexsym}
\usepackage{amsmath}
\usepackage{amsthm}
\usepackage{amssymb}
\usepackage{amsfonts}
\usepackage{alltt} % {verbatim}
\usepackage{mathabx} % needed for \newcommand{\undirect}[1] ... below

% \usepackage{algorithm}                      %% not available on CS cluster
% \usepackage{algorithmic}
%%
%% `algorithm' and `algorithmc' used together in FILE
%% `binding-input-output-pairs.tex' in DIRECTORY:
%% .../Different-Approach-to-Network-Algorithms\ \:\ Max-Flow+Min-Flow/
%%
%% `algorithmc' used in FILE `appendixForDisassemble+Reassemble.tex' in DIRECTORY:
%% .../BUCS-Tech-Report-2013-015.STOC-2014.A\ Compositional\ Approach\ to\ Network\ Algorithms

% \usepackage{algorithmicx} 
% \PassOptionsToPackage{noend}{algpseudocode} %% not available on CS cluster
% \usepackage{algpseudocode}                  %% not available on CS cluster

\usepackage[linesnumbered,ruled,vlined]{algorithm2e}
%%
%% use EITHER \usepackage{algorithm} and \usepackage{algpseudocode}
%% OR \usepackage{algorithm2e} 
%% DO NOT use the latter package with the former two packages
%% Search the Web for examples using `algorithm2e''
%%
\let\oldnl\nl %% Store \nl in \oldnl
\newcommand{\nonl}{\renewcommand{\nl}{\let\nl\oldnl}}
   %% Remove line number from one line only at a time, for package algorithm2e

% \usepackage{graphicx, subfigure}
% \usepackage{subfigure}
% \usepackage{mathrsfs}
\usepackage{amsbsy}
\usepackage[makeroom]{cancel}  %% not available on CS cluster
     %  Draw a diagonal arrow across an expression in a formula to show that
     % it vanishes. With the same package, you can use \cancel{ }, 
     % \bcancel{ } and \xcancel{ }, and all three also work in text mode.
% \usepackage{synttree}   %% not available on CS cluster
     % to draw parse trees
     %  Append ``.t'' to create a triangle instead of a line going
     %  from the label to the level above it, ``.b'' to specify that 
     %  the node has to appear at the bottom of the tree, on a line 
     %  with the lowest leaves. To use ``.b'', a maximum tree depth 
     %  has to be specified. To combine the two features, use ``.x''

% AJK: You cannot use both packages 'enumerate' and 'enumitem' at the same time
% \usepackage{enumerate} % customize \begin{enumerate} ... \end{enumerate}
\usepackage{enumitem} %  customize \begin{enumerate} ... \end{enumerate}
                      %            \begin{itemize} ... \end{itemize}
                      %            \begin{description} ... \end{description}

% \usepackage[thmmarks,standard,thref]{ntheorem}
% AJK: I am unable to make package ``ntheorem'' work !!!

\usepackage{nameref}
        %% needed in order to refer to \paragraph{<heading>}\label{<label>}
        %% by using ~\nameref{<label>} instead of ~\ref{<label>}.
\usepackage[colorlinks=true,
    %       citecolor=green,filecolor=cyan,linkcolor=red,urlcolor=magenta, 
    %       citecolor=black,filecolor=black,linkcolor=black,urlcolor=black,  
            citecolor=blue,filecolor=blue,linkcolor=blue,urlcolor=blue,  
            linktoc=page,
            pagebackref=true]{hyperref} % for hyperlinks 
   %        (parameter 'hidelinks' does NOT work)
% \hypersetup{colorlinks=true,linkcolor=red,linktoc=page}
   % ``colorlinks'': if false, surrounds links with color frames,
   %                 if true, colors text of links.
\usepackage{url}      % if you use \url{...}, a hyperlink will be inserted
                      % to an external webpage, but you also need to use
                      % package ``hyperref'' at the same time
% \usepackage{fullpage}  %% not available on CS cluster
              % simply by virtue of including this package,
              % document will use 1 inch margins on all sides.
              % This may interact poorly with other packages.
\usepackage{afterpage}  
       % used in the command \afterpage{\clearpage}.
       % to flush out all unprocessed floats, you issue 
       % \clearpage command, but this has the effect of making 
       % current page end prematurely. include 
       % \afterpage{\clearpage} so that current page is filled up
       % with text as usual, then \clearpage command will flush out 
       % all floats before next page begins.
\usepackage{float} % for option [H] in float placement,
                   % stronger than [h] and [!h],
                   % and to produce boxed figures and tables
% \usepackage{wrapfig}  %% not available on CS cluster
   % PACKAGE wrapfig MUST COME RIGHT AFTER PACKAGE float
   % example of how to use wrapfig:
   %   \begin{wrapfigure}[10]{R}{0.3\textwidth}
   %   \centering
   %   \includegraphics[width=0.25\textwidth]{frog.jpg}
   %   \caption{\label{fig:frog1}This is a figure caption.}
   %   \end{wrapfigure}
   %   
   %   \begin{wrapfigure}[10]{L}{0.3\textwidth}
   %   \centering
   %   \includegraphics[width=0.25\textwidth]{frog.jpg}
   %   \caption{\label{fig:frog2}This is a figure caption.}
   %   \end{wrapfigure}
\floatstyle{boxed} % every 'float' which is 'restyled' is boxed
% \restylefloat{figure}   % To re-style the 'figure' float in order to box it
% \restylefloat{table}    % To re-style the 'table' float in order to box it

\usepackage{times}
 % default 3 AJK: Don't know what \hmmax{0} does!
 % default 4 AJK: Don't know what \bmmax{0} does!
          % but if omitted, you will get trouble with \bm !!!
\usepackage{bm}   
          % unlike \boldmath, \bm can be used anywhere in math mode
% \usepackage{stmaryrd}  %% not available on CS cluster
          % required for \llbracket and \rrbracket
% \usepackage{mathabx}
% \usepackage{MnSymbol}  %% not available on CS cluster
           % for \llangle and \rrangle
\usepackage{mathrsfs}  % required for \mathscr (``script''), to
           % produce different calligraphic upper case letters,
           % compare \mathscr{C} and \mathcal{C}
% \usepackage{kbordermatrix} % required for the command \kbordermatrix,
           % which is used to produce matrices where rows and columns
           % are labelled. You can do a lot more with this command ...
           % see documentation on the web.
% \usepackage{rotating} % to rotate labels of matrices, tables, etc.
% \usepackage{arydshln} 
           % to use the ``dash'' version of \cline and \hline as
           % \cdashline & \hdashline in arrays & tabular environments
       % AJK: DOESN'T SEEM TO WORK WITH SOME OF THE OTHER PACKAGES!!!!
% \usepackage{nath} % required for \triple[ and \triple]
\usepackage{graphicx}
      % with ``pdflatex'' all figures should have no extension or 
      % extension ``.pdf'' with ``latex'' all figures should have 
      % extension ``.eps'' AJK: If the goal is to produce a ps file,
      % avoid the headache of running ``latex''. Run ``pdflatex'' and,
      %      at the end, convert the pdf file to ps file

%% \usepackage[sort]{natbib} %  the `natbib' package produces citations by
             %  author-names + publication-date, it seems to give
             % trouble with arXiv !!!!

% \usepackage[all]{xy}  %% not available on CS cluster
\usepackage{fancybox}
%% \usepackage[usenames]{color} 
     %% edited out because it conflicts with package tikz

%%%%%%%%%%%%%%%%%%%%%%%%%%
%  AJK: the following, from the web, should explain how to set the
%  parameters of the `minipage' environment (similar to \parbox):
%
% \parbox[position][height][inner-pos]{width}{text}
%
% A parbox is a box whose contents are created in paragraph mode. 
% The \parbox has two mandatory arguments:
%
%    width - specifies the width of the parbox, and
%    text - the text that goes inside the parbox. 
%
% LaTeX will position a parbox so its centre lines up with the centre of
% the text line. The optional position argument allows you to line up
% either the top or bottom line in the parbox (default is top).
%
% If the height argument is not given, the box will have the natural
% height of the text.
%
% The inner-pos argument controls the placement of the text inside the
% box. If it is not specified, position is used.
%
%    t --- text is placed at the top of the box.
%    c --- text is centred in the box.
%    b --- text is placed at the bottom of the box.
%    s --- stretch vertically. The text must contain vertically 
%          stretchable space for this to work. 
%
% A \parbox command is used for a parbox containing a small piece of
% text, with nothing fancy inside. In particular, you shouldn't use any
% of the paragraph-making environments inside a \parbox argument. For
% larger pieces of text, including ones containing a paragraph-making
% environment, you should use a minipage environment See minipage.
%%%%%%%%%%%%%%%%%%%%%%%%%%%%%%%%%%%%%%%%%

\setlength{\textheight}{9.1in}
\setlength{\topmargin}{-0.1in}
\setlength{\headsep}{0in}
\setlength{\oddsidemargin}{-0.1in}
\setlength{\textwidth}{6.7in}
\setlength{\parindent}{0em} % {2em} % eliminates indentation of first paragraph line
\setlength{\parskip}{0.5em} % controls inter-paragraph spacing

\newcommand{\Hide}[1]{}

%% Assaf: following \newif is a toggle between
%% showing notes (when ``\notefalse'' is commented out) and
%% hiding notes (when ``\notetrue'' is commented out)
\newif
\ifnote
\notetrue
% \notefalse

%%% Assaf: to toggle between Tech Report (TR) and STOC submission
\newif
\ifTR
\TRtrue
% \TRfalse

%
\newtheorem{theorem}{Theorem}
\newtheorem{lemma}[theorem]{Lemma}

\newtheorem{proposition}[theorem]{Proposition}

\newtheorem{restxxx}[theorem]{Restriction}

\newtheorem{agreexxx}[theorem]{Agreement}

\newtheorem{termxxx}[theorem]{Terminology}

\newtheorem{notxxx}[theorem]{Notation}

\newtheorem{assumxxx}[theorem]{Assumption}

\newtheorem{convenxxx}[theorem]{Convention}

\newtheorem{exaxxx}[theorem]{Example}

\newtheorem{exexxx}[theorem]{Exercise}

\newtheorem{remxxx}[theorem]{Remark}

\newtheorem{openxxx}[theorem]{Open Problem}
%
   % {\hfill\QED\end{openxxx}} % AJK: I omit \QED from the ``problem''
                % environment, because I don't know how to force \QED
                % to be at the end of the last line in a list
\newtheorem{conjxxx}[theorem]{Conjecture}

\newtheorem{defxxx}[theorem]{Definition}
\newenvironment{definition}[1]{\begin{defxxx}[\emph{#1}]\rm}%
   {\hfill\QED\end{defxxx}}
\newtheorem{defxxxsansQED}[theorem]{Definition}
   {\end{defxxxsansQED}}

\newenvironment{sketch}
{\smallskip\noindent\ignorespaces\textit{Proof Sketch.}}
{\hfill\QED\medskip}
\newtheorem{Prxxx}[theorem]{Proof}
{\end{Prxxx}} % {\hfill\QED\end{Prxxx}}

  {\addtolength{\leftskip}{#1}\addtolength{\rightskip}{#2}}{\par}
% AJK: environment 'custommargins' can be used for temporarily widening
% and narrowing left and right margins. Use it as in following example:
%     \begin{custommargins}{-1.5cm}{-2.5cm}
%     \begin{minipage}{1.05\linewidth} % [b][3.0cm][t]{0.95\linewidth}
%       ..........
%     \end{minipage}
%     \end{custommargins}
% BE CAREFUL WITH MIXING \begin{custommargins} ... \end{custommargins}
% WITH \begin{minipage}... \end{minipage} AND \begin{figure}... \end{figure}
% THESE ENVIRONMENT DO NOT SEEM TO LIKE EACH OTHER ...

% AJK: I have customized figure captions (below) to look somewhat different
% from the default settings, using the documentation by Axel Sommerfeldt
% for the package ``caption''
% \usepackage[margin=30pt,justification=centerlast,font=small,format=hang,%
% labelfont=bf,up,textfont=rm,up]{caption}
\usepackage[margin=0pt,justification=centerlast,font=small,format=hang,%
labelfont=bf,up,textfont=rm,up]{caption}

\newcommand{\Set}[1]{\{ #1 \}}
\newcommand{\SET}[1]{\bigl\{ #1 \bigr\}}
% \newcommand{\Judge}[2]{#1\,\vdash\, #2}

% \newcommand{\AntiJudgement}[3]{#1\,\not\vdash\, #2 : #3}

 % {\mathsf{in}}
% \newcommand{\Pre}[1]{\mathsf{pre}(#1)}
% \newcommand{\Post}[1]{\mathsf{post}(#1)}
 % {\mathsf{length}(#1)}
 % {\mathsf{domain}(#1)}
 % {\mathsf{domain}(#1)}
 % {\mathsf{domain}(#1)}
 % {\mathsf{range}(#1)}
 % {\mathsf{range}(#1)}
 % {\mathsf{range}(#1)}
% \newcommand{\parameters}[1]{\mathsf{parameters}(#1)}
% \newcommand{\para}{\mathsf{Par}}

% \newcommand{\vpara}[1]{\textsf{v-parameters}(#1)}
% \newcommand{\vtyp}{\mathsf{Typ}_\mathrm{v}}
% \newcommand{\dpara}[1]{\textsf{d-parameters}(#1)}
% \newcommand{\dtyp}{\mathsf{Typ}_\mathrm{d}}
% \newcommand{\ccon}{\mathsf{Con}}
% \newcommand{\vccon}{\mathsf{Con}_\mathrm{v}}
% \newcommand{\dccon}{\mathsf{Con}_\mathrm{d}}
% \newcommand{\vpart}[1]{(#1)^\mathrm{v}}
% \newcommand{\dpart}[1]{(#1)^\mathrm{d}}

 % {{#1}^{\bm{\ast}}}

% \newcommand{\semantics}[1]{[\kern-.42em[\;#1\;]\kern-.42em]}
% \newcommand{\links}{Q} % {E}
% \newcommand{\linksI}{\mathsf{In}} % {E_\mathrm{in}}
% \newcommand{\linksO}{\mathsf{Out}} % {E_\mathrm{out}}
% \newcommand{\Typ}[1]{\mathsf{Typ}(#1)} % {\mathsf{types}(#1)}
% \newcommand{\gcsub}[1]{\mathsf{GCSub}\{#1\}}
% \newcommand{\lcsup}[1]{\mathsf{LCSup}\{#1\}}
% \newcommand{\Env}{\Gamma} % {\gamma}
% \newcommand{\stage}[2]{\textbf{stage}[#1]#2}

\newcommand{\B}{{\cal B}}

\newcommand{\N}{{\cal N}}

 % {\bm{\bar{0}}}
 % {\mathscr{O}\bigl(#1\bigr)}
\newcommand{\bigOO}[1]{{\cal O}(#1)} % {\mathscr{O}(#1)}

\newcommand{\Let}[3]%
    {\textbf{\textsf{let}}\ {#1}\,{#2}\ \textbf{\textsf{in}}\;{#3}\,}
% \newcommand{\Choose}[3]%
%     {\textbf{\textsf{choose}}\ {#1} {#2}\ \textbf{\textsf{in}}\;{#3}\;}
\newcommand{\Try}[3]%
    {\textbf{\textsf{try}}\ {#1} {#2}\ \textbf{\textsf{in}}\;{#3}\;}
\newcommand{\Mix}[3]%
    {\textbf{\textsf{mix}}\ {#1} {#2}\ \textbf{\textsf{in}}\;{#3}\;}
\newcommand{\LET}[3]%
    {\textbf{\textsf{let}}^{\bm{*}}\ {#1} {#2}\ \textbf{\textsf{in}}\;{#3}\;}
\newcommand{\Letrec}[3]%
    {\textbf{\textsf{letrec}}\ {#1} {#2}\ \textbf{\textsf{in}}\;{#3}\;}

\newcommand{\degreeSym}{\mathit{deg}} % {{\text{\em degree}}} 
\newcommand{\degr}[2]{{\degreeSym}_{#1}(#2)}

 % {{\ell}^{\star}}

    % {\mathsf{deg}(#1)} % {\textsf{deg}(#1)}
 % {\textsf{deg}(#1)}

       % {\textsf{binding-strength}(#1)}
 
       %{\textsf{joint-arcs}(#1)}
 
       % {\textsf{in-degree}(#1)}

       % {\textsf{out-degree}(#1)}

% \newcommand{\bindingsch}{\textsf{BindingSch}}
  
    % {\textbf{\textsf{BindingSch}}}  

% \newcommand{\closure}[1]{\mathsf{closure}(#1)}
% \newcommand{\lean}[1]{\mathsf{lean\text{-}basis}(#1)}

% \newcommand{\Restrict}{\upharpoonright}
% \newcommand{\Restrictwk}{\downharpoonright}

\newcommand{\ie}{\textit{i.e.}}
\newcommand{\eg}{\textit{e.g.}}
\newcommand{\QED}{{\Large $\square$}} 
  % {\hbox{\hskip6pt\vrule height 8pt width 6pt}} %

% \newcommand{\snBench}{{\sc snBench}}

% \newcommand{\sensorium}{{\sc Sensorium}}
% \newcommand{\traffic}{{\sc Traffic}}
% \newcommand{\chain}{{\sc Chain}}
% \newcommand{\ibench}{{\sl iBench}}

 % {\mathsf{time}}

\newcommand{\nreals}{\mathbb{R}_{+}}

\newcommand{\reals}{\mathbb{R}}

\newcommand{\intervals}[1]{{\cal I}(#1)}

 % {\mathsf{Poly}^*(#1)} 
\newcommand{\IndexSym}{\mathit{index}} % {\mathsf{index}(#1)} 
\newcommand{\Index}[1]{{\IndexSym}(#1)} % {\mathsf{index}(#1)} 
\newcommand{\size}[1]{|\,#1\,|}

% \newcommand{\fTp}[1]{\mathsf{flow2path}(#1)}  
% \newcommand{\pTf}[1]{\mathsf{path2flow}(#1)}  
 % {{\text{$\Bfmath{\bot}[#1]$}}}  
 % {{\text{$\Bfmath{\top}[#1]$}}}  

\newcommand{\set}[1]{\overline{#1}}
 % {{\bar{#1}}^*}
% \newcommand{\bowTieSym}{\bowtie}
% \newcommand{\bowTie}[2]{{#1 \bowTieSym #2}}
% \newcommand{\bowTSym}[1]{\stackrel{\footnotesize #1}{\bowtie}}
% \newcommand{\bowT}[3]{{#1\stackrel{\footnotesize #2}{\bowTieSym}{#3}}}
 % {\emph{reverse}(#1)} %
% \newcommand{\capacity}[1]{\emph{capacity}(#1)}
 % {\mathsf{Permutations}(#1)} % 

% \newcommand{\prefix}[1]{\mathsf{Prefixes}(#1)}
% \newcommand{\suffix}[1]{\mathsf{Suffixes}(#1)}
% \newcommand{\gcp}[1]{\mathsf{gcp}\Set{#1}}
% \newcommand{\longest}[1]{\mathsf{LCSuffix}\Set{#1}}
%

% \newcommand{\pathIDSym}{\mathsf{PathIDs}}
% \newcommand{\pmPathIDSym}{\pm\mathsf{PathIDs}}
% \newcommand{\pPathIDSym}{{+\mathsf{PathIDs}}}
% \newcommand{\mPathIDSym}{{-\mathsf{PathIDs}}}
% \newcommand{\pmPathID}[1]{\pmPathIDSym(#1)}
% \newcommand{\pPathID}[1]{\pPathIDSym(#1)}
% \newcommand{\mPathID}[1]{\mPathIDSym(#1)}
% \newcommand{\pmID}{{\pm\mathsf{ID}}}
% \newcommand{\pID}{{+\mathsf{ID}}}
% \newcommand{\mID}{{-\mathsf{ID}}}
% \newcommand{\ID}{\mathsf{ID}}
%
\newcommand{\power}[1]{\mathscr{P}(#1)}

\newcommand{\OutF}[1]{\mathsf{OuterFace}(#1)}
\newcommand{\OutPlan}[2]{{{#1}\text{-}\mathsf{outerplanarity}}(#2)}
  %{{\mathsf{outerplanarity}}_{#1}(#2)}

% \newcommand{\merg}[2]{#1 {\ostar} #2} % {#1 {\bowtie} #2}
     %% \ostar requires MnSymbol.sty which is not available on
     %% the cs cluster
 % {#1 {\bowtie} #2}

% \newcommand{\FullSem}[1]{\fullSem{\!\!\fullSem{\,#1\,}\!\!}} 

\newcommand{\CC}{\mathscr{C}}

% \newcommand{\UU}{\mathscr{U}}

% \newcommand{\mhr}[1]{{\varphi}^{#1}_{\text{HR}}}
% \newcommand{\mau}[1]{{\varphi}^{#1}_{\text{AU}}}

% \newcommand{\bridges}[1]{\text{\em bridges}(#1)}

  % {\text{\em canonical-arrangement}(#1)}

 % {\text{\em R-A}} 
 % {\text{\em R-A}(#1)}
 % {\text{\em ParRA}(#1,#2)}
 % {\text{\em ParRA}}

  % {\text{\em tree}\bigl(#1\bigr)}

 % {\text{\rm MinCut}}
 % {\text{\rm MinCut}}
 % {\text{\rm MinCut}}
 % {\text{\rm MinLA}}
 % {\text{\rm MinLA}}
 % {\varphi}
 % {\Phi}

 % {\preccurlyeq}
 % {\prec}

%eof

\newcommand{\undirect}[1]{{\widecheck{#1}}}
\newcommand{\transA}[1]{{#1}^{\star}} % {{\overline{#1}}}

\newcommand{\vertices}[1]{{\mathbf{V}(#1)}}
\newcommand{\edges}[1]{{\mathbf{E}(#1)}}
\newcommand{\Vertices}[1]{{\mathbf{V}\big(#1\big)}}

\newcommand{\upperB}{\overline{\it c}}
\newcommand{\lowerB}{\underline{\it c}}
\newcommand{\induce}[1]{{#1}^{\#}}

\begin{document}

% \setlength{\baselineskip}{18pt}
% \spacing{\normalsize}{0.98} % {0} % You do NOT need to use both 
       % \setlength{\baselineskip}{??pt} and
       % \spacing{\normalsize}{?.??} in order to change line spacing

% \renewcommand{\thepage}{}
\setcounter{page}{1}     % \thispagestyle{empty} has to be inserted right
                         % after \maketitle to suppress the page numbering
                         % on the title page
\setcounter{tocdepth}{1} % to suppress subsections and subsubsections
                         % in the table of contents.
\ifTR
  \pagenumbering{roman} % for title page and table-of-contents page
\else
\fi

\title{A Fixed-Parameter Linear-Time Algorithm \\
       for Maximum Flow in Planar Flow Networks} 
\author{Assaf Kfoury%
           \thanks{Partially supported by NSF awards CCF-0820138
           and CNS-1135722.} \\
        Boston University \\
        \ifTR Boston, Massachusetts \\ 
        \href{mailto:kfoury@bu.edu}{kfoury{@}bu.edu}
        \else \fi
        \Hide{
          \and   Benjamin Sisson%
           \footnotemark[1] \\
        Boston University   \\
        \ifTR Boston, Massachusetts \\ 
        \href{mailto:bmsisson@bu.edu}{bmsisson{@}bu.edu}
        \else \fi
        }
}

\ifTR
   \date{\today}
\else
   \date{} % 
\fi
\maketitle
  \ifTR
     \thispagestyle{empty} % it has to be inserted right after \maketitle
                           % in order to suppress the page numbering
  \else
  \fi

\ifTR
    % \newpage
    \tableofcontents
    \newpage
%    \pagenumbering{arabic}
\else
    \vspace{-.2in}
\fi

% \vspace{-.3in}
  \begin{abstract}
  \addcontentsline{toc}{section}{Abstract}
   %% BEGIN abstract.tex

\noindent
We pull together previously established graph-theoretical results to
produce the algorithm in the paper's title. The glue are three easy
elementary lemmas.

%% END abstract.tex

  \end{abstract}

  \newpage
  \pagenumbering{arabic}  

\section{Introduction}
\label{sect:intro}
   %% BEGIN introduction.tex

We combine a previous result on what is called \emph{graph
reassembling}, together with a previous result on what are
called \emph{network typings}, in order to show the existence of an
algorithm that returns the value of a maximum flow in planar flow
networks in fixed-parameter linear-time.  Those results are made to
work together by means of three easy elementary lemmas. In this
introductory section we informally explain the notions involved;
formal definitions are in later sections of the report.

One way of understanding the \emph{reassembling} of a simple
undirected graph $G$ is this: It is the process of cutting every edge
of $G$ in two halves, and then splicing the two halves of every edge,
one by one in some order, in order to recover the original $G$. We
thus start from one-vertex components, with one component for each
vertex $v$ and each with $\degr{}{v}$ dangling half edges,%
\footnote{$\degr{}{v}$ is the degree of vertex $v$, the number of
  edges incident to $v$.
  }
and then gradually reassemble larger and larger components of
the original $G$ until $G$ is fully reassembled. One optimization
associated with graph reassembling is to keep the number of dangling
half edges of each reassembled component as small as possible.
Graph reassembling and associated optimization problems are
examined in earlier reports on network
analysis~\cite{Kfoury:SCP2014,SouleBestKfouryLapets:eoolt11,%
kfoury+mirzaei:2017,kfoury+mirzaei:2017B}.

As for \emph{network typings}, these are algebraic or arithmetic
formulations of interface conditions that network components must
satisfy for them to be safely and correctly interconnected. A
particular use of network typings is to quantify desirable properties
related to resource management (\eg, percentage ranges of channel
utilization, mean delays between routers, etc., as well as flow
conservation and capacity constraints along channels), and to enforce
them as invariant properties across network interfaces. More on this
use of network typings is in several
reports~\cite{BestKfoury:dsl11,Kfoury:sblp11,Kfoury:SCP2014}.
In this paper, a typing for a network component $\N$ is limited to
specify a range of admissible values for every combination of input
ports (or ``sources'') and output ports (or ``sinks'') of $\N$.

The parameter to be bounded in the algorithm of our main
result is called the \emph{edge-outerplanarity} of a planar
graph. Edge-outerplanarity is distinct but closely related to the
usual notion of outerplanarity, and was introduced in earlier studies
for other purposes (\eg, disjoint paths in sparse graphs, as
in~\cite{bentz2009}). As with outerplanarity, for a fixed
edge-outerplanarity $k$, the number $n$ of vertices in a graph can be
arbitrarily large. Our main result can be re-phrased thus: For the class
${\CC}_k$ of planar flow networks whose edge-outerplanarity is bounded
by a fixed $k\geqslant 1$, there is an algorithm which, given an arbitrary
$\N\in{\CC}_k$, computes the value of a maximum flow in $\N$ in time
$\bigOO{n}$ where $n = \size{\N}$.

\paragraph{Organization of the Report.}

Section~\ref{sect:preliminaries} is background material that makes
precise many of the notions we use throughout the report.
Section~\ref{sect:transformation} includes the three elementary lemmas
(Lemmas~\ref{lem:basic}, \ref{lem:basic-for-plane},
and~\ref{lem:equivalent-networks}) that we need to pull together the
results on graph reassembling and network typings.

A formal definition of \emph{graph reassembling} -- different from, but
equivalent to, the informal definition above -- 
is in Section~\ref{sect:two-previous}, which includes the
optimization result (Theorem~\ref{thm:about-reassembling})
that we need for the main result.
A formal definition of \emph{network typings} -- also more general
than the informal definition above -- 
is in Section~\ref{sect:two-previous}, where we present the
relevant result about typings (Theorem~\ref{thm:about-typing})
that we use in this paper.

Our main result (Theorem~\ref{thm:our-result}) % and its proof are
is in Section~\ref{sect:our-result}.
We conclude with a brief discussion of follow-up work in
Section~\ref{sect:future}.

%% END introduction.tex

\section{Preliminary Notions}
\label{sect:preliminaries}
   %% BEGIN preliminaries.tex

 In this paper we need to consider both directed and undirected
 graphs.  We use the same letter `$G$', possibly decorated, to refer
 to both directed and undirected graphs; the context will make clear
 whether $G$ is directed or undirected. We refer to the vertices and
 edges of a graph $G$ by writing $\vertices{G}$ and $\edges{G}$.

 \paragraph{Directed Graphs and Undirected Graphs.}
 Throughout, our undirected graphs are \emph{simple graphs}, \ie,
 they have no self-loops and no multi-edges.
 In particular, an edge is uniquely identified by the two-element set of
 its endpoints $\Set{v,w}$, which we also write as $\set{v\,w}$.
 \Hide{
 We think of the
 two-element set $\Set{v,w}$ as a multiset, so that $\Set{v,v} \neq
 \Set{v}$ and $\set{v\,v}$ is a (undirected) self-loop from vertex $v$
 back to vertex $v$.  (The usual definition of \emph{simple graphs}
 excludes both self-loops and multi-edges; thus our undirected graphs
 are not simple, though they exclude multi-edges.)
 }

 In the case of directed graphs also, we disallow self-loops as well as
 multi-edges with the same direction. However, we allow two edges with
 opposite directions between the same two vertices $v$ and $w$,
 written as the ordered pairs $(v,w)$ and $(w,v)$. We also write
 $\set{v\,w}$ and $\set{w\,v}$ for $(v,w)$ and $(w,v)$, respectively.

 The context will make clear whether $\set{v\,w}$ is an undirected
 edge in an undirected graph, or a directed edge in a directed graph.
 If $\set{v\,w}$ is undirected, then $\set{v\,w} = \set{w\,v}$;
 if $\set{v\,w}$ is directed, then $\set{v\,w} \neq \set{w\,v}$.

 Let $G$ be a directed graph. The undirected version of $G$, denoted
 $\undirect{G}$, consists in ignoring all edge directions. In the
 graphical representation of $G$, all the edges are reproduced in
 $\undirect{G}$, with every arrow
 `$\xrightarrow{\hspace{.8cm}}$' replaced by a line segment
 `\raisebox{.25em}{\rule{.8cm}{.03em}}', \emph{with one exception}: Two
 directed edges between the same two vertices,
 `$v \xrightarrow{\hspace{.8cm}} w$' and `$v \xleftarrow{\hspace{.8cm}} w$',
 are collapsed into a single line segment
 `$v$ \raisebox{.25em}{\rule{.8cm}{.03em}} $w$'.%
 \footnote{By this reasoning and contrary to what is often done elsewhere,
   we do not consider here an undirected graph as a special case of a
   directed graph, whereby every undirected edge $\Set{v,w}$ is viewed as
   being two directed edges $(v,w)$ and $(w,v)$.}

 If $G$ is a directed graph containing
 two edges with opposite directions between the same two
 vertices $\Set{v,w}$, say $e_1 = \set{v_1\,v_2}$ and $e_2 = \set{v_2\,v_1}$,
 then $\Set{e_1,e_2}$ form what we call a \emph{two-edge cycle} in $G$.
 Two-edge cycles do not occur in undirected graphs.

 For a vertex $v$ in a directed graph, we write
 $\degr{\text{in}}{v}$ and $\degr{\text{out}}{v}$ for the in-degree and
 out-degree of $v$. And we write $\degr{}{v}$ for
 $\degr{\text{in}}{v} + \degr{\text{out}}{v}$, the total number of edges
 incident to $v$, both incoming and outgoing.

 \paragraph{Flow Networks.}

 A flow network is a quadruple $(G,c,s,t)$ where $G$ is a directed
 graph, $c:\edges{G}\to\nreals$ is the capacity function on edges, and
 $s$ (the \emph{source}) and $t$ (the \emph{sink}) are two distinct
 members of $\vertices{G}$. Trivially, for the max flow problem from
 $s$ to $t$, there is no loss of generality in assuming that the
 underlying graph $G$ is \emph{connected} and contains
 no \emph{self-loops}.%
         \footnote{We write $\nreals$ for the set of non-negative
         real numbers and $\reals$ for the set of all real numbers.}

 If the underlying graph $G$ of the network is connected, then so is
 its undirected version $\undirect{G}$. Biconnectedness is a stronger
 requirement than connectedness (``there are at least two distinct
 directed paths between any two points'') which we cannot impose on
 $G$.
 
 Nonetheless, we can further assume that, if $G$ is the underlying
 graph of a flow network $(G,c,s,t)$, then $\undirect{G}$ (though not
 $G$ itself) is \emph{biconnected}.  This means there are no cut
 vertices in $\undirect{G}$. Indeed, suppose $\undirect{G}$ is
 connected but not biconnected. If the source $s$ and the sink $t$ are
 in the same component (\ie, maximal biconnected subgraph)
 $\undirect{G}'$ of $\undirect{G}$, we can discard all biconnected
 subgraphs other than $\undirect{G}'$, and compute a max flow from $s$
 to $t$ relative to $G'$ only, where $G'$ is the subgraph of $G$ whose
 undirected version is $\undirect{G}'$. If the source $s$ and the sink
 $t$ are in two distinct components $\undirect{G}'$ and
 $\undirect{G}''$ of $\undirect{G}$, respectively, then there are at
 least $p\geqslant 1$ cut vertices, say $\Set{v_1,\ldots,v_p}$, such
 that all directed paths from $s$ to $t$ in $G$ visit the same $p$
 vertices. For simplicity, suppose $p=1$ and there is only one cut
 vertex $v$ on the directed paths from $s$ and $t$; the argument
 extends to an arbitrary number $p\geqslant 1$ in the obvious way.
 With one cut vertex $v$, we compute a first max flow $f_1$ from $s$
 to $v$ and a second max flow $f_2$ from $v$ to $t$; the max flow in
 the original $G$ is $\max \Set{f_1,f_2}$.

 To compute a max flow in $(G,c,s,t)$ by first identifying the
 biconnected components in the underlying $\undirect{G}$ in a
 preprocessing phase, as suggested in the preceding paragraph, does
 not add more than linear sequential time $\bigOO{m+n}$ or logarithmic
 parallel time $\bigOO{\log n}$ to the overall cost, where
 $m = \size{\edges{G}}$ and $n = \size{\vertices{G}}$; \eg,
 see~\cite{tarjan1984,hochbaum1993,nikolopoulos2007}.

 If $\undirect{G}$ is biconnected, there are no
 vertices $v\in\vertices{G}$ such that $\degr{}{v} = 1$. However,
 there may exist vertices $v\in\vertices{G}$ such that $\degr{}{v} = 2$.
 Consider a fixed $v\in\vertices{G} - \Set{s,t}$ such that $\degr{}{v} = 2$,
 which must therefore occur in the graphical representation of $G$ in one
 of three configurations $\Set{\text{(a)},\text{(b)},\text{(c)}}$ where:
 \[
   \text{(a)}\quad
   v_1 \xrightarrow{\ \ e_1\ \ } v \xrightarrow{\ \ e_2\ \ } v_2,
   \qquad\text{(b)}\quad
   v_1 \xleftarrow{\ \ e_1\ \ } v \xrightarrow{\ \ e_2\ \ } v_2,
   \qquad\text{(c)}\quad
   v_1 \xrightarrow{\ \ e_1\ \ } v \xleftarrow{\ \ e_2\ \ } v_2,  
\]
  for some $v_1,v_2\in\vertices{G}-\Set{v}$. We will assume configurations
  (b) and (c) do not occur in $G$, as they do not contribute any value
  to the max flow from $s$ to $t$.%
  \footnote{We do not suggest that we can allow the presence of
  configurations (b) and (c) in $G$, and then eliminate them in a
  preprocessing phase in linear time. To do the latter in full
  generality, without restrictions on the topology of $G$, would take
  more than $\bigOO{n}$ time though not more than $\bigOO{n^2}$, but
  that would be enough to spoil the linear time of our final result.}
  As for configuration (a), we can delete
  the two edges $\set{v_1\,v}$ and $\set{v\,v_2}$, replace them by a
  single new edge $\set{v_1\,v_2}$, and define the new capacity
  $c(\set{v_1\,v_2}) := \min \Set{c(\set{v_1\,v}), c(\set{v\,v_2})}$;
  clearly, this is can be done without affecting the final value of the
  max flow from $s$ to $t$, and can be done in time $\bigOO{n}$ in
  a preprocessing phase.

  We do not exclude the possibility $\degr{}{s} = 2$ and/or
  $\degr{}{t} = 2$, but in constant time $\bigOO{1}$ we can sligtly
  modify the underlying $G$ to $G'$, and update the capacity function
  $c$ to $c'$, so that $(G,c,s,t)$ is equivalent to $(G',c',s,t)$ and
  $\degr{}{s} = \degr{}{t} = 3$. For example, if $\degr{}{s} = 2$,
  we can do the following: Introduce $3$ fresh vertices $\Set{v_1,v_2,v_3}$
  and three fresh edges $\Set{\set{v_1\,s},\set{v_1\,v_2},\set{v_1\,v_3} }$,
  with $v_2$ and $v_3$ inserted in the two edges incident to $s$,
  and then set $c'(e) = c(e)$ for every edge $e\in\edges{G}$
  and $c'(\set{v_1\,s}) = c'(\set{v_1\,v_2}) = c'(\set{v_1\,v_3}) = 0$.

  Based on the preceding comments, there is no loss of generality in
  making the following assumption ${(\diamondsuit)}$, which is to be
  satisfied by the underlying graph $G$ of every flow network in this paper.
  
\paragraph{Assumption $\bm{(\diamondsuit)}$.}\label{assumption}
  If $G$ is a directed graph, then it satisfies three conditions:
  \begin{itemize}[itemsep=1pt,parsep=2pt,topsep=2pt,partopsep=0pt] 
  \item[(1)] % [$(\diamondsuit.1)$]
  \quad $G$ has no \emph{self-loops},
  \item[(2)] % [$(\diamondsuit.2)$]
  \quad $\degr{}{v}\geqslant 3$ for every $v\in\vertices{G}$, and
  \item[(3)] % [$(\diamondsuit.3)$]
   \quad the undirected version $\undirect{G}$ of $G$ is \emph{biconnected}.
  \end{itemize}
  Note that~\nameref{assumption} does not preclude the presence of
  two-edge cycles in $G$.

\paragraph{Edge Outerplanarity of Plane Graphs.}

  A commonly used parameter of undirected plane graphs is \emph{outerplanarity}.
  % , or $k$-\emph{outerplanarity} where $k$ is a natural number.
  A less common parameter is \emph{edge outerplanarity},
  % or more simply \emph{E-outerplanarity},
  which is also only defined for undirected plane graphs.  We here
  extend both notions to all graphs, directed and undirected.

  We make a distinction between \emph{planar} graphs and \emph{plane}
  graphs.  $G$ is a \emph{plane graph} if it is drawn on the plane
  without any edge crossings. $G$ is a \emph{planar graph} if it is
  isomorphic to a plane graph; \ie, it is embeddable in the plane in
  such a way that its edges intersect only at their endpoints. To keep
  the distinction between the two notions, we define
  the \emph{outerplanarity index} of a \emph{planar} graph and
  the \emph{outerplanarity} of a \emph{plane} graph.

  If $G$ is a plane graph, directed or undirected, then
  the \emph{outerplanarity} of $G$ is the number $k$ of times that all
  the vertices on the outer face (together with all their incident
  edges) have to be removed in order to obtain the empty graph. In
  such a case, we say that the plane graph $G$
  is \emph{$k$-outerplanar}.

  If $G$ is a planar graph, directed or undirected, then
  the \emph{outerplanarity index} of $G$ is the minimum of the
  outerplanarities of all the plane embeddings $G'$ of $G$.

  Deciding whether an arbitrary graph is planar can be carried out in
  linear time $\bigOO{n}$ and, if it is planar, a plane embedding of it
  can also be carried out in linear time~\cite{patrignani2013}.  Given a
  planar graph $G$, the outerplanarity index $k$ of $G$ and a
  $k$-outerplanar embedding of $G$ in the plane can be computed in time
  $\bigOO{n^2}$, and a $4$-approximation of its outerplanarity index can
  be computed in linear time~\cite{Kammer2007}.

We give a formal definition of \emph{edge outerplanarity}, less
common than standard outerplanarity, now also extended to directed graphs.

\begin{definition}{Edge-Outerplanarity}
\label{def:edge-outerplanarity}
  Let $G$ be a plane graph, directed or undirected.
  If $\edges{G} =\varnothing$ and $G$ is a graph of isolated vertices,
  the \emph{edge outerplanarity} of $G$ is $0$. If $\edges{G} \neq\varnothing$,
  we pose $G_0 := G$ and define $K_0$ as the set of edges lying on $\OutF{G_0}$.

  For every $i>0$, we define $G_i$ as the plane graph obtained
  after deleting all the edges in $K_0 \cup \cdots \cup K_{i-1}$ from the
  initial $G$ and $K_i$ the set of edges lying on $\OutF{G_i}$.
  
  The \emph{edge outerplanarity} of $G$, denoted $\OutPlan{E}{G}$, 
  is the least integer $k$ such that $G_{k}$ is a graph without edges,
  \ie, the edge outerplanarity of $G_{k}$ is $0$. This process of peeling
  off the edges lying on the outer face $k$ times produces a $k$-block
  partition of $\edges{G}$, namely, $\Set{K_0,\ldots,K_{k-1}}$.%
   \footnote{There is an unessential difference between our definition here and
     the definition in~\cite{bentz2009}. In Section 2.2 of that
     reference, ``a $k$-edge-outerplanar graph is a planar graph
     having an embedding with \emph{at most} $k$ layers of edges.'' In
     our presentation, we limit the definition to plane graphs and say ``a
     $k$-edge-outerplanar plane graph has \emph{exactly} $k$ layers of
     edges.'' Our version simplifies a few things later.}
\end{definition}

  To keep \emph{outerplanarity} and \emph{edge outerplanarity} clearly
  apart, we call the first \emph{vertex outerplanarity}, or more
  simply \emph{V-outerplanarity}, and the second \emph{edge
  outerplanarity}, or more simply \emph{E-outerplanarity}.

  There is a close relationship between \emph{V-outerplanarity}
  and \emph{E-outerplanarity} (Theorem 4 in Section 5.1
  in~\cite{bentz2009}).  In the case of three-regular plane graphs, the
  relationship is much easier to state. This is
  Proposition~\ref{prop:V-outer-vs-E-outer} next.
  
\Hide{  \emph{V-outerplanarity} and \emph{E-outerplanarity} are ``almost the same'',
  (Proposition~\ref{prop:V-outer-vs-E-outer} below).

We conclude by stating the relationship between the
standard notion of outerplanarity and the notion of
edge-outerplanarity used in this report.  If $G$ is a plane graph, let
$\OutPlan{V\!\!}{G}$ denote the smallest $k$ such that $G$ is
$k$-outerplanar (this is $k$-outerplanarity in the standard sense). }

\begin{proposition}
  \label{prop:V-outer-vs-E-outer}
  If $G$ is a $3$-regular plane graph, directed or undirected, then:
  \[
     \OutPlan{V\!\!}{G} \leqslant \OutPlan{E}{G} 
     \leqslant 1+ \OutPlan{V\!\!}{G}.
  \]
  Thus, for $3$-regular plane graphs,
  \emph{V-outerplanarity} and \emph{E-outerplanarity} are ``almost the same''.
\end{proposition}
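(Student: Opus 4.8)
The plan is to run the two peeling processes in parallel and compare them round by round. Write the vertex-peeling as $H_0 = G, H_1, H_2, \ldots$, where $H_{i+1}$ is obtained from $H_i$ by deleting every vertex on $\OutF{H_i}$ together with its incident edges; call $L_i$ that set of deleted outer-face vertices, so that $\OutPlan{V\!\!}{G} = 1 + \max\Set{i : L_i \neq \varnothing}$. The edge layers $K_i$ and the graphs $G_i$ are as in Definition~\ref{def:edge-outerplanarity}, and $\OutPlan{E}{G} = 1 + \max\Set{i : K_i \neq \varnothing}$. A useful preliminary observation is that every edge of $G$ joins two vertices of the same layer $L_i$, or of two consecutive layers $L_i, L_{i+1}$: deleting $L_i$ exposes each of its deeper neighbours and forces it onto $\OutF{H_{i+1}}$. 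It therefore suffices to track, for each edge, the round in which each process removes it.

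For the lower bound $\OutPlan{V\!\!}{G} \leqslant \OutPlan{E}{G}$ I would prove by induction on $i$ that $\edges{H_i} \subseteq \edges{G_i}$, i.e.\ edge-peeling never removes an edge before vertex-peeling does. The inductive step is a Jordan-curve argument: an edge of $H_{i+1}$ has both endpoints strictly inside $H_i$ (neither lies on $\OutF{H_i} = L_i$), so it is an interior edge of $H_i$; since by hypothesis all of $H_i$ survives in $G_i$, its intact outer boundary shields that edge from $\OutF{G_i}$, so it cannot belong to $K_i$. Combining this with the consecutive-layer observation sharpens the conclusion to: an edge is removed by edge-peeling no earlier than the deeper of its two endpoints is reached by vertex-peeling. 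Applied to any edge incident to a deepest vertex $v \in L_{k-1}$, where $k = \OutPlan{V\!\!}{G}$, this shows some edge is still present in $G_{k-1}$, whence $\OutPlan{E}{G} \geqslant k$.

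For the upper bound $\OutPlan{E}{G} \leqslant 1 + \OutPlan{V\!\!}{G}$ I would show, again by induction on $i$, the complementary containment $\edges{G_{i+1}} \subseteq \edges{H_i}$: every edge incident to an already-peeled layer $L_0, \ldots, L_{i-1}$ is gone after $i+1$ rounds of edge-peeling. This is where $3$-regularity enters. In the plane graph $H_{i-1}$ the outer boundary is a cycle, so each $w \in L_{i-1}$ carries exactly two boundary-cycle edges; having total degree $3$, it has at most one further edge (inward to $L_i$, or a chord inside $L_{i-1}$). The two boundary edges are removed in edge-peeling round $i-1$, so by the time we reach $G_i$ each such $w$ retains at most one edge, which is either a pendant hanging from the boundary of $H_i$ or an isolated chord; in both cases it lies on $\OutF{G_i}$ and is deleted in round $i$. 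Hence edge-peeling stays at most one round behind vertex-peeling, and taking $i = \OutPlan{V\!\!}{G}$ gives the bound.

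The delicate part is the bookkeeping in this last induction. The danger to exclude is a layer vertex whose surviving edges ``fan into'' several boundary vertices of $H_i$ and so enclose an edge off the outer face; it is precisely $3$-regularity---two of the three edges being consumed by the boundary cycle---that rules this out and caps the lag at a single round. The argument presumes each $\OutF{H_i}$ is a single cycle, i.e.\ that the subgraphs met are $2$-connected; this is guaranteed for networks obeying Assumption~$(\diamondsuit)$, and the general case is handled by arguing with the boundary closed walk instead. Finally, since $\OutF{\cdot}$, $\OutPlan{V\!\!}{\cdot}$, and $\OutPlan{E}{\cdot}$ depend only on the plane drawing and ignore orientations, the directed and undirected statements coincide.
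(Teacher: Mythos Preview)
Your round-by-round comparison of the two peeling processes is exactly the idea behind the paper's sketch, which says only that the two peelings agree until the final stage, where edge-peeling may need one extra round. Your containments $\edges{H_i}\subseteq\edges{G_i}$ and $\edges{G_{i+1}}\subseteq\edges{H_i}$ are the right invariants, and your argument is far more detailed than what the paper provides.

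Two points need tightening. First, the claim that Assumption~$(\diamondsuit)$ guarantees each $H_i$ is $2$-connected is simply false: biconnectedness of $G$ does not survive peeling off outer layers. You already note the closed-walk fix for the general case; that is what you must use throughout, and you should drop the appeal to~$(\diamondsuit)$. Relatedly, a vertex $w\in L_{i-1}$ need not have degree $3$ in $H_{i-1}$ (some of its three edges may go back to $L_{i-2}$), so ``exactly two boundary-cycle edges'' is too strong; the correct statement is that $w$ contributes at most one edge off the boundary walk.

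Second, and this is the step you yourself flag as delicate: when you say ``the two boundary edges are removed in edge-peeling round $i-1$'', you need those edges to lie on $\OutF{G_{i-1}}$, not merely on $\OutF{H_{i-1}}$. Since by your lower bound $G_{i-1}$ may properly contain $H_{i-1}$, extra edges drawn in $\OutF{H_{i-1}}$ could in principle screen the boundary of $H_{i-1}$ from the unbounded face of $G_{i-1}$. What rules this out is precisely the conclusion of the \emph{previous} inductive step: each vertex of $L_{i-2}$ has degree $\leqslant 1$ in $G_{i-1}$, so the surplus $\edges{G_{i-1}}\setminus\edges{H_{i-1}}$ is a forest and encloses nothing. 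You derive this degree bound inside your argument, but you need to carry it explicitly as part of the induction hypothesis for the next step to go through. Once you do, the proof is complete.
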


\begin{sketch}
  For a $3$-regular plane graph, the difference between
  $\OutPlan{V\!\!}{G}$ and $\OutPlan{E}{G}$ occurs in the last stage
  in the process of repeatedly removing (in the case of standard
  \emph{V-outerplanarity}) all vertices on the outer face and all their incident
  edges. The corresponding last stage in the case
  of \emph{E-outerplanarity} may or may not delete all edges; if
  it does not, then one extra stage is needed to delete all remaining
  edges.
\end{sketch}

The preceding result is not
true for arbitrary plane graphs, even if they are regular.
Consider, for example, the four-regular
plane graph $G$ in Figure~\ref{fig:four-regular-plane-graph}, where
$\OutPlan{V\!\!}{G} = 2$ while $\OutPlan{E}{G} = 4$. 

\begin{figure}[H] % [h]
%%% \begin{custommargins}{0cm}{0cm} % {-0.40cm}{-2.2cm}
% \vspace{.25in}
\begin{centering}
   %% BEGIN four-regular-plane-graph.tex

\noindent
% \begin{minipage}[b]{.99\linewidth}
%
\begin{tikzpicture}[scale=.3] 
       \newcommand\EdgeOpacity{[line width=1.4,black,opacity=.98]};
       % 15 x 20 grid
       \draw [help lines, dotted] (0, 0) grid (12,12);
       \coordinate (A) at (0,0);
       \coordinate (B) at (6,0);
       \coordinate (C) at (12,0);
       \coordinate (D) at (12,6);
       \coordinate (E) at (12,12);
       \coordinate (F) at (6,12);
       \coordinate (G) at (0,12);
       \coordinate (H) at (0,6);
       \coordinate (I) at (3,3);
       \coordinate (J) at (6,3);       
       \coordinate (K) at (9,3);
       \coordinate (L) at (9,6);       
       \coordinate (M) at (9,9);
       \coordinate (N) at (6,9);       
       \coordinate (O) at (3,9);
       \coordinate (P) at (3,6);       

       \draw \EdgeOpacity (B) to[out=0,in=-90] (D) ;
       \draw \EdgeOpacity (D) to[out=90,in=0] (F) ;       
       \draw \EdgeOpacity (B) -- (K) ;              
       \draw \EdgeOpacity (D) -- (M) ;       
       \draw \EdgeOpacity (F) to[out=180,in=90] (H) ;
       \draw \EdgeOpacity (F) -- (O) ;       
       \draw \EdgeOpacity (H) to[out=-90,in=180] (B) ;
       \draw \EdgeOpacity (H) -- (I) ;
       \draw \EdgeOpacity (I) -- (J) ;
       \draw \EdgeOpacity (I) -- (B) ;       
       \draw \EdgeOpacity (J) -- (K) ;
       \draw \EdgeOpacity (J) -- (L) ;       
       \draw \EdgeOpacity (K) -- (D) ;
       \draw \EdgeOpacity (K) -- (L) ;       
       \draw \EdgeOpacity (L) -- (M) ;
       \draw \EdgeOpacity (L) -- (N) ;       
       \draw \EdgeOpacity (M) -- (F) ;
       \draw \EdgeOpacity (M) -- (N) ;       
       \draw \EdgeOpacity (N) -- (O) ;
       \draw \EdgeOpacity (N) -- (P) ;       
       \draw \EdgeOpacity (O) -- (H) ;
       \draw \EdgeOpacity (O) -- (P) ;              
       \draw \EdgeOpacity (P) -- (I) ;       
       \draw \EdgeOpacity (P) -- (J) ;

       \node at (B) {\huge $\bullet$}; 
       \node at (D) {\huge $\bullet$};
       \node at (F) {\huge $\bullet$};   
       \node at (H) {\huge $\bullet$};
       \node at (I) {\huge $\bullet$};
       \node at (J) {\huge $\bullet$};
       \node at (K) {\huge $\bullet$};   
       \node at (L) {\huge $\bullet$}; 
       \node at (M) {\huge $\bullet$};
       \node at (N) {\huge $\bullet$};
       \node at (O) {\huge $\bullet$};
       \node at (P) {\huge $\bullet$};   
\end{tikzpicture}

%% END four-regular-plane-graph.tex
\end{centering}   
% \vspace{.25in}
   \caption{A four-regular plane graph $G$, 
   with
   $\OutPlan{V\!\!}{G} = 2$ and $\OutPlan{E}{G} = 4$.
   }
\label{fig:four-regular-plane-graph}
% \vspace{.25in}
%%% \end{custommargins}
\end{figure}

%% END preliminaries.tex

\section{A Flow-Preserving and Planarity-Preserving Transformation}
\label{sect:transformation}
%% BEGIN transformation.tex

We define a transformation which, given an arbitrary directed graph $G$
satisfying~\nameref{assumption} on page~\pageref{assumption},
returns a directed graph $\transA{G}$ where:
\begin{itemize}[itemsep=1pt,parsep=2pt,topsep=2pt,partopsep=0pt] 
  \item $\degr{}{v} = 3$ for every vertex $v\in\Vertices{\transA{G}}$, and
  \item there are no two-edge cycles,
\end{itemize}
where $\degr{}{v} = \degr{\text{in}}{v} + \degr{\text{out}}{v}$,
the total number of edges incident to vertex $v$, both incoming and outgoing.
The transformation $G\mapsto \transA{G}$ is defined in terms of an operation
which we call \textbf{expand}.

\begin{definition}{Expand}
\label{def:expand}
 The operation \textbf{expand} is applied to vertices of degrees
 $\geqslant 3$.  Given a vertex $v$ such that
 $\degr{}{v} = p \geqslant 3$, there are $p$ edges incident to $v$, say
 $\Set{e_1 , \ldots , e_p}$.
 The expansion of $v$ consists in constructing a simple cycle with $p$
 fresh vertices $\Set{v_1 , \ldots , v_p}$ and $p$ fresh edges
 $\Set{e'_1 , \ldots , e'_p}$, and then attaching the original edges
 $e_1 ,\ldots,e_p$ to the cycle thus constructed at the new vertices
 $v_1 , \ldots , v_p$, respectively. An example when $p=4$ is shown in
 Figure~\ref{fig:expand}.
\end{definition}

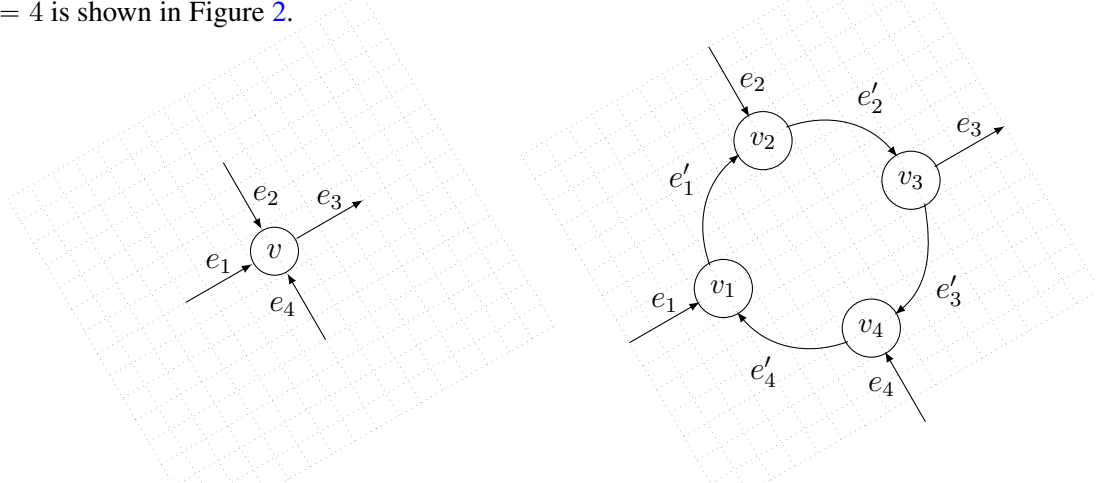
\begin{figure}[h] % [H] % [h]
%%% \begin{custommargins}{0cm}{0cm} % {-0.40cm}{-2.2cm}
% \vspace{.25in}
\begin{center}
   \noindent
   \begin{minipage}[b][6cm]{.35\textwidth}
   %% BEGIN expand1.tex

\noindent
% \begin{minipage}[b]{.99\linewidth}
%
\rotatebox{30}{\begin{tikzpicture}[scale=.34] 
%       [>=stealth',shorten >=1pt,auto,node distance=3 cm, 
%    scale = .4, transform shape] % 
%       \newcommand\EdgeOpacity{[line width=1.2,->-=.7,black,opacity=.98]};
       \newcommand\EdgeOpacity{[line width=1.4,black,opacity=.98]};
       % 15 x 20 grid
       \draw [help lines, dotted] (0, 0) grid (16,15);
\coordinate (A0) at (8,8);
\coordinate (A1) at (4,8);
\coordinate (B1) at (8,12);
\coordinate (C1) at (12,8);
\coordinate (D1) at (8,4);

\coordinate (A0min) at (7,8);
\coordinate (A0max) at (9,8);
\coordinate (A0up) at (8,9);
\coordinate (A0dn) at (8,7);

\draw (A0) node[rotate=-30,circle,draw] {\large $v$};

\draw[->] (A1) to[out=0,in=180] node[rotate=-30,auto] {\large $e_1$} (A0min);
\draw[->] (B1) to[out=270,in=90] node[rotate=-30,auto] {\large $e_2$} (A0up);
\draw[->] (A0max) to[out=0,in=180] node[rotate=-30,auto] {\large $e_3$} (C1);
\draw[->] (D1) to[out=90,in=270] node[rotate=-30,auto] {\large $e_4$} (A0dn);

\end{tikzpicture}
}

%% END expand1.tex
   \end{minipage}
   \qquad % \qquad
   \begin{minipage}[b][6cm]{.55\textwidth}
   %% BEGIN expand2.tex

\noindent
% \begin{minipage}[b]{.99\linewidth}
%
\rotatebox{30}{\begin{tikzpicture}[scale=.36] 
%       [>=stealth',shorten >=1pt,auto,node distance=3 cm, 
%    scale = .4, transform shape] % 
%       \newcommand\EdgeOpacity{[line width=1.2,->-=.7,black,opacity=.98]};
       \newcommand\EdgeOpacity{[line width=1.4,black,opacity=.98]};
       % 15 x 20 grid
       \draw [help lines, dotted] (0, 0) grid (16,15);
\coordinate (A1) at (0,7);
\coordinate (A2) at (4,7);
\coordinate (A2min) at (3,7);
\coordinate (A2up) at (4,8);
\coordinate (A2dn) at (4,6);
\coordinate (B1) at (8,15);
\coordinate (B2) at (8,11);
\coordinate (B2up) at (8,12);
\coordinate (B2min) at (7,11);
\coordinate (B2max) at (9,11);
\coordinate (C1) at (16,7);
\coordinate (C2) at (12,7);
\coordinate (C2max) at (13,7);
\coordinate (C2up) at (12,8);
\coordinate (C2dn) at (12,6);
\coordinate (D1) at (8,-1);
\coordinate (D2) at (8,3);
\coordinate (D2dn) at (8,2);
\coordinate (D2min) at (7,3);
\coordinate (D2max) at (9,3);

\draw (A2) node[rotate=-30,circle,draw] {$v_1$};
\draw (B2) node[rotate=-30,circle,draw] {$v_2$};
\draw (C2) node[rotate=-30,circle,draw] {$v_3$};
\draw (D2) node[rotate=-30,circle,draw] {$v_4$};

\draw[->] (A2up) to[out=80,in=190] 
          node[rotate=-30,auto] {\large $e'_1$} (B2min);
\draw[->] (A1) to[out=0,in=180] node[rotate=-30,auto] {\large $e_1$} (A2min);
\draw[->] (B2max) to[out=350,in=100]
          node[rotate=-30,auto] {\large $e'_2$} (C2up);
\draw[->] (B1) to[out=270,in=90] node[rotate=-30,auto] {\large $e_2$} (B2up);
\draw[->] (C2dn) to[out=250,in=10]
          node[rotate=-30,auto] {\large $e'_3$} (D2max);
\draw[->] (C2max) to[out=0,in=180] node[rotate=-30,auto] {\large $e_3$} (C1);
\draw[->] (D2min) to[out=170,in=280]
          node[rotate=-30,auto] {\large $e'_4$} (A2dn);
\draw[->] (D1) to[out=90,in=270] node[rotate=-30,auto] {\large $e_4$} (D2dn);

\end{tikzpicture}
}

%% END expand2.tex
   \end{minipage}
\end{center}   
\vspace{-.55in}
   \caption{Applying the \textbf{expand} operation
   to a degree-$4$ vertex $v$ (on the
   left) produces a cycle with four new vertices $\Set{v_1,v_2,v_3,v_4}$
   and four new edges $\Set{e'_1,e'_2,e'_3,e'_4}$ (on the right),
   while preserving planarity. 
%   of the three configurations in Definition~\ref{def:splice}.
   }
\label{fig:expand}
% \vspace{.25in}
%%% \end{custommargins}
\end{figure}

The transformation $G\mapsto \transA{G}$ has two stages
in sequence. Stage 1 eliminates all vertices $v$ such that
$\degr{}{v}\geqslant 4$, and Stage 2 eliminates all two-edges cycles.
\begin{description}[itemsep=1pt,parsep=2pt,topsep=2pt,partopsep=0pt]
\item[Stage 1:] All vertices $v$ such that
     $\degr{}{v}\geqslant 4$ are eliminated by applying the
     \textbf{expand} operation repeatedly, until it cannot
     be applied.
\end{description}
After Stage 1 there are only degree-$3$ vertices in the
transformed directed graph. But we still want to eliminate every
two-edge cycle, \ie, two edges of the form $\set{v\,w}$ and $\set{w\,v}$
where $v\neq w$; we want to eliminate such a two-edge
cycle because $\set{v\,w}$ and $\set{w\,v}$ collapse into a single edge
$\Set{v,w}$ in the undirected version of the graph. This is the
purpose of Stage 2, to prevent such a collapse.
\begin{description}[itemsep=1pt,parsep=2pt,topsep=2pt,partopsep=0pt]     
\item[Stage 2:]
     Every two-edge cycle $\Set{\set{v\,w},\set{w\,v}}$ where
     $v\neq w$ is eliminated by applying the \textbf{expand} operation
     twice, once to each of its endpoints $v$ and $w$, where necessarily
     $\degr{}{v} = \degr{}{w} = 3$ after Stage 1.
\end{description}

Stage 1 and Stage 2 complete the transformation $G\mapsto \transA{G}$.
In words, we have transformed the original $G$ into a $3$-regular $\transA{G}$
by adding ``a few'' directed edges to the former.

\begin{lemma}
  \label{lem:basic}
  Let $G$ be a directed graph satisfying~\nameref{assumption},
  where $\size{\edges{G}} = m$ and $\size{\vertices{G}} = n$. 
  We have the following facts:
\begin{enumerate}[itemsep=1pt,parsep=2pt,topsep=2pt,partopsep=0pt]     
  \item The transformation $G\mapsto \transA{G}$ is carried out in
        linear time $\bigOO{n}$.
  \item $\size{\edges{\transA{G}}} \leqslant 3m$
  \item $\size{\vertices{\transA{G}}} \leqslant n + 2 m$.        
\end{enumerate}
\end{lemma}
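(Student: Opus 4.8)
The plan is to analyze the two stages of the transformation separately and track how each contributes to the edge and vertex counts, then bound the total running time by the total amount of structural change performed.

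First I would set up the bookkeeping. Recall that by Assumption $(\diamondsuit)$ every vertex has $\degr{}{v}\geqslant 3$, so $\sum_{v}\degr{}{v} = 2m \geqslant 3n$, giving the elementary inequality $n \leqslant \tfrac{2}{3}m$ which I will use to convert bounds on $m$ into the claimed $\bigOO{n}$ running time (equivalently, $m = \bigOO{n}$). The key observation for the edge count is that a single \textbf{expand} applied to a vertex $v$ with $\degr{}{v}=p$ creates exactly $p$ new edges $\Set{e'_1,\ldots,e'_p}$ (the cycle edges) and adds $p$ new vertices $\Set{v_1,\ldots,v_p}$, while leaving the $p$ original incident edges $e_1,\ldots,e_p$ intact but redistributing them to the fresh vertices. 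So expansion of a degree-$p$ vertex contributes $p$ new edges and $p$ new vertices and removes one old vertex.

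For part (2), I would sum the new cycle-edges created across Stage 1 and Stage 2. In Stage 1, each expanded vertex of degree $p$ contributes $p$ new edges; since $\sum_v \degr{}{v} = 2m$ over all vertices, the total number of cycle-edges added in Stage 1 is at most $2m$, so after Stage 1 we have at most $m + 2m = 3m$ edges. Here I need to be slightly careful: Stage 1 only expands vertices of degree $\geqslant 4$, but the bound $\sum_v \degr{}{v} = 2m$ over \emph{all} vertices is a valid overcount, so the $2m$ bound holds. The subtle point I expect to be the main obstacle is Stage 2: each expansion there is applied to degree-$3$ vertices, adding $3$ edges per endpoint, and one must argue that Stage 2 does not push the edge count past $3m$. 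I would handle this by observing that Stage 2 operates on two-edge cycles already present in the \emph{original} $G$ (two-edge cycles are neither created nor destroyed by Stage 1, since \textbf{expand} only adds cycle edges between distinct fresh vertices and never forms a new opposite-direction pair on the same vertex pair), and by accounting the Stage-2 edges against the original edges of those two-edge cycles so that the aggregate still respects the degree-sum bound giving $3m$ total; I would verify the constant by checking the worst case directly on the degree sequence rather than by re-expanding.

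For part (3), the vertex count follows the same accounting: each \textbf{expand} on a degree-$p$ vertex replaces $1$ vertex by $p$, a net gain of $p-1 \leqslant p$. Summing over all expansions and using $\sum_v \degr{}{v} = 2m$ yields at most $2m$ net new vertices, so $\size{\vertices{\transA{G}}} \leqslant n + 2m$. Finally, for part (1), I would argue that each \textbf{expand} operation is performed in time proportional to the degree of the vertex it acts on (building a cycle of that size and reattaching edges), so the total work is proportional to $\sum_v \degr{}{v} = 2m = \bigOO{n}$ by the inequality $m\leqslant\tfrac{3}{2}n$ derived from $3$-regularity of the result (or directly from $n\leqslant\tfrac23 m$ being an equality-style bound in the relevant direction); a linear-time scan to locate high-degree vertices and two-edge cycles adds only $\bigOO{m+n}=\bigOO{n}$ overhead. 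The cleanest write-up collects all three bounds from the single identity that expansion of a degree-$p$ vertex costs $p$ in each of edges, vertices, and time, and then invokes $\sum_v \degr{}{v}=2m$ together with $m=\bigOO{n}$.
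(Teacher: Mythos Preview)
Your accounting for parts 2 and 3 via the handshaking identity is sound and essentially the same as the paper's proof, just organized by vertex rather than by edge: the paper counts new edges per original edge (at most two, one for each expanded endpoint), while you count per expanded vertex (exactly $\degr{}{v}$ new cycle edges), and these two sums coincide. Your worry about Stage~2 pushing the total past $3m$ dissolves once you observe that the vertices expanded in Stage~1 and in Stage~2 are \emph{disjoint} subsets of $\vertices{G}$: Stage~1 only touches vertices of degree $\geqslant 4$, and Stage~2 only touches original degree-$3$ vertices (any two-edge cycle with a degree-$\geqslant 4$ endpoint is already broken up by Stage~1, since its two edges get attached to distinct fresh cycle vertices). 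Hence the total number of new edges over both stages is $\sum_{v\text{ expanded}}\degr{G}{v}\leqslant\sum_{v\in\vertices{G}}\degr{G}{v}=2m$, and there is no separate overflow to account for.

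For part~1 there is a genuine gap. From $\sum_v\degr{}{v}=2m\geqslant 3n$ you correctly derive $n\leqslant\tfrac{2}{3}m$, but this says $n=\bigOO{m}$, \emph{not} $m=\bigOO{n}$; the implication runs the wrong way. Your later appeal to ``$m\leqslant\tfrac{3}{2}n$ derived from $3$-regularity of the result'' conflates $G$ with $\transA{G}$: it is $\transA{G}$ that is $3$-regular, and the identity $2m^{\star}=3n^{\star}$ relates $\size{\edges{\transA{G}}}$ to $\size{\vertices{\transA{G}}}$, not $m$ to $n$. For an arbitrary directed graph satisfying Assumption~$(\diamondsuit)$ one can have $m=\Theta(n^2)$, so your $\bigOO{m}$ running-time estimate cannot be converted to $\bigOO{n}$ by degree considerations alone. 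The paper avoids this by arguing directly that each stage visits each vertex once with an $\bigOO{1}$ local step, rather than charging $\bigOO{\degr{}{v}}$ per expansion; if you prefer to charge by degree, you need an extra hypothesis (planarity, via Euler's formula) to obtain $m=\bigOO{n}$, which is precisely what the paper adds in the very next lemma.
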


\begin{proof}
  The proof of part 1 of the lemma is straightforward, with Stage 1
  and Stage 2 each requiring $\bigOO{n}$ time to do its work. Each of
  the two stages needs to visit each vertex $v$ only once, to test
  whether $v$ satisfies the condition calling for a local
  transformation at $v$ and costing $\bigOO{1}$ time.
  
  For the proof of part 2 of the lemma, note that Stage 1 works on
  vertices $v$ such that $\degr{}{v} \geqslant 4$ such that none of
  the new edges it introduces are involved in two-edge cycles; that
  is, every two-edge cycle that is present \emph{after} Stage 1 is a
  two-edge cycle that is already present \emph{before} Stage 1.  Stage
  2 works on degree-$3$ vertices that are endpoints of two-edge
  cycles, none of them introduced in the Stage 1.

  Let $q$ be the number of edges $e = \set{v\,w}$ or $e = \set{w\,v}$
  with one or two endpoints satisfying one of two conditions:
  \begin{itemize}[itemsep=1pt,parsep=2pt,topsep=2pt,partopsep=0pt]
    \item $\degr{}{v} \geqslant 4$, or
    \item $\degr{}{v} = 3$ and $v$ is one of two vertices on a two-edge cycle; 
  \end{itemize}
  these are the endpoints/vertices worked on during Stage 1 and Stage 2.
  Each edge $e$ of these $q$ edges is associated with one or two new edges,
  depending on whether one or two of $e$'s endpoints are expanded. We conclude:
\[
   \size{\edges{\transA{G}}} \leqslant m + 2q \leqslant m + 2m = 3 m.
\]
  For the proof of part 3 of the lemma, we use the same reasoning as for
  part 2, to show that:
\[
   \size{\vertices{\transA{G}}} \leqslant n + 2q 
   \leqslant n + 2 m .
\]
  We omit the straightforward details.%
  \footnote{\label{foot:first}
  The upper bound $3m$ on $\size{\edges{\transA{G}}}$ is tight,
  in that there are directed graphs $G$
  satisfying~\nameref{assumption} on page~\pageref{assumption}
  such that $\size{\edges{\transA{G}}} = 3m$; this happens
  when the two endpoints of every edge in $G$ are expanded in Stage 1 or
  Stage 2. However, the upper bound $n+2m$ on $\size{\vertices{\transA{G}}}$
  is not tight; this is so because, if vertex $v$ of degree $= p$ is expanded,
  then each of the $p$ incident edges $\Set{e_1,\ldots,e_p}$ contributes one
  new vertex on the cycle replacing $v$, but $v$ itself has to be removed
  from the total count of vertices.}
\end{proof}

The next lemma specializes Lemma~\ref{lem:basic} to the case
of plane directed graphs. It makes clear that for plane directed graphs,
the transformation $G\mapsto\transA{G}$ produces a (small) linear
growth in the size.

\begin{lemma}
  \label{lem:basic-for-plane}
  If $G$ is a plane directed graph satisfying~\nameref{assumption},
  with $\size{\edges{G}} = m$ and $\size{\vertices{G}} = n$ where
  $n\geqslant 3$, then:%
     \footnote{Again here, the upper bounds are not tight. See
     footnote~\ref{foot:first}. But they are easy to compute and
     good enough for our main result.}
  \begin{enumerate}[itemsep=1pt,parsep=2pt,topsep=2pt,partopsep=0pt] 
  \item $\size{\edges{\transA{G}}} \leqslant 18 n - 36$,
  \item $\size{\vertices{\transA{G}}} \leqslant 13 n - 24$, and
  \item $\transA{G}$ is a plane directed graph
         satisfying~\nameref{assumption} such that
     \begin{itemize}[itemsep=1pt,parsep=2pt,topsep=2pt,partopsep=0pt]
        \item[\emph{(3.a)}]
               there are no two-edge cycles in $\transA{G}$,
        \item[\emph{(3.b)}]
               $\degr{}{v} = 3$ for every $v\in\vertices{\transA{G}}$, and
        \item[\emph{(3.c)}]
               $\OutPlan{E}{G} = \OutPlan{E}{\transA{G}}$.
     \end{itemize}
  \end{enumerate}
\end{lemma}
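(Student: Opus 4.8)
The plan is to derive parts~1 and~2 from Lemma~\ref{lem:basic} by bounding $m$ in terms of $n$, to read off~(3.a) and~(3.b) directly from the construction of $\transA{G}$, and to treat~(3.c) by tracking how the \textbf{expand} operation moves edges between the layers of the edge-peeling process.

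For parts~1 and~2, first I would observe that the undirected version $\undirect{G}$ is a simple biconnected planar graph on $n\geqslant 3$ vertices, so $\size{\edges{\undirect{G}}}\leqslant 3n-6$. Since \nameref{assumption} permits two-edge cycles, each undirected edge lifts to at most two directed edges of $G$, whence $m=\size{\edges{G}}\leqslant 2(3n-6)=6n-12$. Substituting into the bounds $\size{\edges{\transA{G}}}\leqslant 3m$ and $\size{\vertices{\transA{G}}}\leqslant n+2m$ of Lemma~\ref{lem:basic} gives $18n-36$ and $13n-24$ respectively. Items~(3.a) and~(3.b) are then immediate from the two stages: every \textbf{expand} replaces a degree-$p$ vertex by a simple cycle of $p$ fresh degree-$3$ vertices, so after Stage~1 and Stage~2 the graph is $3$-regular; Stage~2 splits each two-edge cycle by expanding both endpoints (the two parallel arcs are sent to distinct cycle vertices), and no \textbf{expand} creates a new two-edge cycle, so $\transA{G}$ has none. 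Planarity is preserved because each \textbf{expand} is a local replacement respecting the cyclic order of the incident edges (Figure~\ref{fig:expand}).

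The substance is~(3.c). Since $\transA{G}$ is a composition of finitely many \textbf{expand} operations at distinct vertices, it suffices to show that a single \textbf{expand} leaves $\OutPlan{E}{\cdot}$ unchanged and then induct. I would work with the face-depth description of the peeling: assign to each face its \emph{depth}, the least number of edges crossed on a path from $\OutF{\cdot}$ to that face, and note that an edge $e$ lies in layer $K_i$ exactly when $i$ is the smaller of the depths of the two faces bordering $e$. Expanding a vertex $v$ of degree $p$ inserts a cycle $C_v$ whose interior is one new face; the $p$ faces incident to $v$ persist with unchanged depths (the new edges are only barriers and cannot shorten any crossing path), so every original edge keeps its layer. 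This already gives $\OutPlan{E}{G}\leqslant\OutPlan{E}{\transA{G}}$. For the reverse inequality I would compute the depth of the new interior face as $d_{\min}+1$, where $d_{\min}$ is the smallest depth among the faces incident to $v$, so that each new cycle edge, bordering the interior face and one incident gap face of depth $d$, lands in layer $\min(d,\,d_{\min}+1)$.

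The main obstacle is precisely the innermost new cycle edge. When all gaps at $v$ have equal depth the new edges fall into layer $d_{\min}$ and nothing changes, but in general the cycle edge bordering the \emph{deepest} incident gap acquires layer $d_{\min}+1$, and one must rule out that this exceeds the previous maximum layer $M=\OutPlan{E}{G}-1$. Equivalently, the delicate point is to show that an expanded vertex is never incident to a face that is a \emph{strict} local maximum of depth (a face of depth $M+1$ surrounded by faces of depth $M$) while all its remaining incident faces already have depth $\geqslant M$; only then is $d_{\min}+1\leqslant M$ guaranteed. I would try to establish this from \nameref{assumption} together with the restricted set of vertices actually expanded (degree $\geqslant 4$ in Stage~1, and the degree-$3$ endpoints of two-edge cycles in Stage~2), and I expect this case analysis -- controlling the local depth profile around an expanded vertex -- to be the crux of the whole lemma. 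An alternative route, once $\transA{G}$ is known to be $3$-regular, is to pass through Proposition~\ref{prop:V-outer-vs-E-outer} and compare $\OutPlan{E}{\transA{G}}$ with the vertex-outerplanarity of $\transA{G}$, thereby reducing~(3.c) to a statement about how \textbf{expand} affects \emph{V-outerplanarity}.
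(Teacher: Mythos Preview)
Your treatment of parts~1, 2, (3.a), (3.b), and planarity preservation is essentially identical to the paper's: bound $m$ via Euler's formula on $\undirect{G}$ (noting two-edge cycles at most double the edge count), then substitute into Lemma~\ref{lem:basic}. The paper does exactly this.

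For~(3.c) the two diverge sharply. The paper's entire argument is the sentence ``it is readily checked that the equality $\OutPlan{E}{G}=\OutPlan{E}{\transA{G}}$ is also an invariant of every step of the transformation $G\mapsto\transA{G}$,'' with no further justification. Your face-depth analysis is therefore not a different route to the same destination so much as an attempt to supply an argument where the paper gives none. And the obstacle you isolate is real: a single \textbf{expand} does \emph{not} preserve edge-outerplanarity in general. For instance, in the ordinary cube graph (two nested squares with four spokes, edge-outerplanarity~$2$), expanding any inner vertex---say $a$, with incident faces of depths $1,1,2$---creates a new interior face of depth~$2$ adjacent to the old depth-$2$ inner face, and the cycle edge between them lands in $K_2$, raising the edge-outerplanarity to~$3$. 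So your reduction ``it suffices to show that a single \textbf{expand} leaves $\OutPlan{E}{\cdot}$ unchanged'' fails as stated.

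That cube vertex has degree~$3$ and is not on a two-edge cycle, so it is never expanded by Stage~1 or Stage~2; the lemma may still be true because only particular vertices are expanded. But the paper does not make this distinction, and your instinct that the case analysis around the expanded vertex---controlling when $d_{\min}+1$ can exceed the old maximum layer---is ``the crux of the whole lemma'' is well founded. You have not closed the gap, but you have located it more precisely than the paper does.
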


\begin{proof}
Euler's formula (Theorem 4.2.7 and its corollaries
in~\cite{diestel2012}) is usually proved for undirected plane graphs
(no self-loops, no multi edges) and written as $m \leqslant 3n-6$ when
$n\geqslant 3$. But our $G$ is a \emph{directed} plane graph, which
may contain two-edge cycles (but no self-loops). If every double-edge
cycle in $G$ is collapsed into a single edge, we can write
$m/2 \leqslant 3n - 6$, because there are at least $m/2$ edges in
$\undirect{G}$. Hence, $m \leqslant 6n - 12$. Hence also, by parts 2
and 3 in Lemma~\ref{lem:basic}, we have:
\begin{alignat*}{5}
   & \size{\edges{\transA{G}}}\ && \leqslant
     \ 3m \leqslant\ 3(6n-12)\ =\ 18n - 36,
\\[1.12ex]
   & \size{\vertices{\transA{G}}} && \leqslant
     \ n+2m\ \leqslant\ n + 2(6n-12)\ =\ 13n - 24,
\end{alignat*}
as claimed for parts 1 and 2 of the lemma.

For part 3, first note that the transformation
$G\mapsto \transA{G}$ is defined to guarantee (3.a) and (3.b).
Morever, it is readily checked that planarity is an invariant
of every step of the transformation: If $G$ is a plane graph
(not just planar), then so is $\transA{G}$.
Finally, it is readily checked that the equality:
\[
    \OutPlan{E}{G} = \OutPlan{E}{\transA{G}}
\]
is also an invariant of every step of the
transformation $G\mapsto \transA{G}$. The desired conclusion follows.
\end{proof}

We need one more easy lemma.  Let $(G,c,s,t)$ be a flow network.  We
define a new flow network $(\transA{G},\transA{c},\transA{s},\transA{t})$.
The transformation $G\mapsto \transA{G}$ is already defined.  We still
have to define $\transA{c}$, $\transA{s}$, and $\transA{t}$.  In the
transformation $G\mapsto\transA{G}$, every edge $G$ is preserved in
$\transA{G}$, which allows us to view
$\edges{G} \subseteq \edges{\transA{G}}$. So we define:
\[
     \transA{c}(e) :=
     \begin{cases}
     c(e)\quad & \text{if $e\in\edges{G}$},
     \\[1.1ex]
     \text{`a very large capacity'} \quad &
          \text{if $e\in\edges{\transA{G}} - \edges{G}$}.
     \end{cases}
\]
The idea of assigning `a very large capacity' to every new edge introduced
in the transformation $G\mapsto\transA{G}$ is to make these new edges
have no effect in restricting the flow through the network.

If the source $s$ was not expanded into a cycle in the transformation
$G\mapsto\transA{G}$, then $\transA{s} {:=} s$, else $\transA{s} {:=} $ any
of the new vertices on the cycle that replaces $s$. And similarly for the
definition of $\transA{t}$ from the original sink $t$.

Two flow networks $(G_1,c_1,s_1,t_1)$ and
$(G_2,c_2,s_2,t_2)$ are \emph{equivalent} iff
for every flow $f_i : \edges{G_i}\to\nreals$ there is a flow
$f_j : \edges{G_j}\to\nreals$ such that
$\size{f_i} = \size{f_j}$ for all $\Set{i,j} = \Set{1,2}$.

\begin{lemma}
\label{lem:equivalent-networks}
  Let $(G,c,s,t)$ be a flow network, where $G$ is a plane directed
  graph $G$ satisfying~\nameref{assumption} 
  and $\size{\vertices{G}} = n$, and consider the derived flow network
  $(\transA{G},\transA{c},\transA{s},\transA{t})$ as defined above.
  It then holds that:
  \begin{enumerate}[itemsep=1pt,parsep=2pt,topsep=2pt,partopsep=0pt]     
  \item The transformation
        $(G,c,s,t)\mapsto (\transA{G},\transA{c},\transA{s},\transA{t})$
        is carried out in linear time $\bigOO{n}$.
  \item  $(G,c,s,t)$ and $(\transA{G},\transA{c},\transA{s},\transA{t})$
        are equivalent flow networks. %
\Hide{        %
        \footnote{, \ie, \\ for every flow
        $f : \edges{G}\to\reals$ there is a flow
        $\transA{f} : \edges{\transA{G}}\to\reals$ such that
        $\size{f} = \size{\transA{f}}$ and \\
        for every flow $\transA{f} : \edges{\transA{G}}\to\reals$
        there is a flow $f : \edges{G}\to\reals$ such that
        $\size{f} = \size{\transA{f}}$. }
        }
  \end{enumerate}        
\end{lemma}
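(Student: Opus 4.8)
The plan is to handle the two parts separately, with part~1 following quickly from the earlier lemmas and part~2 carrying the real content. For part~1, the transformation $G\mapsto\transA{G}$ is already $\bigOO{n}$ by part~1 of Lemma~\ref{lem:basic}; computing $\transA{c}$ requires a single pass over $\edges{\transA{G}}$, copying $c(e)$ when $e\in\edges{G}$ and assigning the large capacity otherwise, which is $\bigOO{\size{\edges{\transA{G}}}} = \bigOO{n}$ by Lemma~\ref{lem:basic-for-plane}; and designating $\transA{s}$ and $\transA{t}$ costs $\bigOO{1}$. So the whole transformation runs in $\bigOO{n}$ time.

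For part~2, the key structural fact I would isolate first is a \emph{cut identity}: whenever a vertex $v$ of $G$ is replaced by a cycle $C_v = \Set{v_1,\ldots,v_p}$ under the \textbf{expand} operation, the edges of $\transA{G}$ crossing between $C_v$ and the rest of $\transA{G}$ are exactly the original edges $e_1,\ldots,e_p$ incident to $v$ (the $p$ fresh cycle edges lie entirely inside $C_v$). Consequently, for any flow $\transA{f}$ on $\transA{G}$, the net flow out of the vertex set $C_v$ equals the net flow out of $v$ computed on the original edges alone, since the fresh cycle edges cancel in the sum of the net flows out of the individual $v_i$. This identity is what ties the two flow values together regardless of how much flow circulates inside a cycle.

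With that in hand I would prove the two directions of equivalence. For the direction $G\to\transA{G}$: given a flow $f$ on $G$, set $\transA{f}(e) := f(e)$ on every preserved edge $e\in\edges{G}$, and on each cycle $C_v$ solve the local conservation equations for the fresh cycle-edge values. At a non-terminal expanded $v$, demanding conservation at all of $v_1,\ldots,v_p$ yields a linear system that is consistent precisely because $f$ already conserves at $v$, leaving one free circulation parameter; at an expanded source or sink I instead demand conservation at all cycle vertices except $\transA{s}$ (resp.\ $\transA{t}$). In every case I choose the free parameter large enough to make all cycle-edge flows non-negative --- this is where the ``very large capacity'' assignment is essential, since it guarantees the chosen values stay within $\transA{c}$. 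The cut identity then makes the net flow out of $\transA{s}$ equal to the net flow out of $s$, so $\size{\transA{f}} = \size{f}$. For the reverse direction $\transA{G}\to G$: given $\transA{f}$, define $f$ by $f(e) = \transA{f}(e)$ for $e\in\edges{G}$. Capacities are respected because $\transA{c}(e)=c(e)$ on preserved edges; conservation at each expanded or unexpanded vertex of $G$ follows from the cut identity applied to the interior cycles; and $\size{f}=\size{\transA{f}}$ follows from the same identity at the source cycle.

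The step I expect to be the main obstacle is the feasibility --- non-negativity within capacity --- of the circulation values built in the forward direction: one must argue that a single free parameter per cycle can be chosen so that all $p$ cycle-edge flows are simultaneously non-negative and below the large capacity. I would make this precise by writing each cycle-edge flow as the free parameter plus a partial sum of the (bounded) original edge flows, and then observing that any value exceeding $\sum_{e\in\edges{G}} c(e)$ serves as the ``very large capacity,'' which is finite for a finite network.
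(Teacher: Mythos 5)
Your proof is correct and follows the same route as the paper: part 1 is the identical accounting (part 1 of Lemma~\ref{lem:basic} for $G\mapsto\transA{G}$, a linear pass over edges for $\transA{c}$, constant time for $\transA{s},\transA{t}$), and part 2 is exactly the argument the paper dismisses as ``straightforward \ldots\ all formal details omitted''---restricting/extending flows across the inclusion $\edges{G}\subseteq\edges{\transA{G}}$, with the new cycle edges absorbing whatever circulation is needed. Your cut identity, the one-free-parameter conservation argument on each expansion cycle, and the observation that any capacity exceeding $\sum_{e\in\edges{G}} c(e)$ qualifies as ``a very large capacity'' are precisely the omitted details, worked out correctly.
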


\begin{proof}
The transformation $G\mapsto \transA{G}$ takes time $\bigOO{n}$, by
part 1 of Lemma~\ref{lem:basic}. The updating from $c$ to $\transA{c}$
takes time $\bigOO{m}$, where $\size{\edges{G}} = m$, and therefore
time $\bigOO{n}$ by Euler's formula (as in the proof of
Lemma~\ref{lem:basic-for-plane}). And setting $\transA{s}$ and $\transA{t}$
takes time $\bigOO{1}$. The conclusion of part 1 follows.

The proof of part 2 is straightforward, since
$\edges{G} \subseteq \edges{\transA{G}}$, with the edges in $G$ preserving
their capacities in $\transA{G}$ and the edges not in $G$ assigned each
`a very large capacity'. All formal details omitted.
\end{proof}

Note that part 2 in Lemma~\ref{lem:equivalent-networks} holds even if
$G$ is not a plane graph, but we do not need this fact for our main
result.  That $G$ is a plane graph is only used in the proof of part 1
in Lemma~\ref{lem:equivalent-networks} to change the complexity bound
from $\bigOO{m+n}$ to $\bigOO{n}$.

%% END transformation.tex

\section{Two Previous Results}
\label{sect:two-previous}
%% BEGIN two-previous.tex

  The first result below (Theorem~\ref{thm:about-reassembling}) is
  about the \emph{reassembling problem}, which was studied in earlier
  reports and is here stated in terms of simple undirected graphs (no
  multi-edges, no self loops), but which applies equally well to
  directed graphs satisfying~\nameref{assumption} on
  page~\pageref{assumption}. 

  \paragraph{Graph Reassembling.}
  The \emph{reassembling} of a simple undirected graph $G$ is an abstraction
  of a problem arising in studies of network
  analysis~\cite{BestKfoury:dsl11,Kfoury:sblp11,Kfoury:SCP2014,%
  SouleBestKfouryLapets:eoolt11}.
  There are several equivalent definitions of graph reassembling.  An
  informal intuitive definition was already given in
  Section~\ref{sect:intro}. A formal definition consists in
  constructing a rooted binary tree $\B$ whose nodes are subsets of
  $\vertices{G}$ and whose leaf nodes are singleton sets, with each of
  the latter containing a distinct vertex of $G$.  The parent of two
  nodes in $\B$ is the union of the two children's vertex sets. The
  root node of $\B$ is the full set $\vertices{G}$. If $n
  = \size{\vertices{G}}$, there are thus $n$ leaf nodes in $\B$ and a
  total of $(2n -1)$ nodes in $\B$. We denote the reassembling of $G$
  according to $\B$ by writing $(G,\B)$.%
  \footnote{To keep apart $\B$ and $G$, we reserve the words `node'
  and `branch' for the tree $\B$, and the words `vertex' and `edge'
  for the graph $G$.}

  The \emph{edge-boundary degree} of a node in $\B$ is the number of
  edges that connect vertices in the node's set to vertices not in the
  node's set. Following a terminology used in earlier reports,
  the \emph{$\alpha$-measure} of the reassembling $(G,\B)$, denoted
  $\alpha(G,\B)$, is the largest edge-boundary degree of any node in
  the tree $\B$. We say $\alpha(G,\B)$ is \emph{optimal}
  if it is minimum among all $\alpha$-measures of $G$'s reassemblings,
  in which case we also say $\B$ is \emph{$\alpha$-optimal}.

  The problem of constructing an $\alpha$-optimal reassembling
  $(G,\B)$ of a simple undirected graph $G$ in general was already
  shown NP-hard~\cite[among
  others]{kfoury+mirzaei:2017,kfoury+mirzaei:2017B}.  However,
  restricting attention to \emph{plane} graphs, we have the following
  positive result.

\begin{theorem}
\label{thm:about-reassembling}
  There is an algorithm which, given a plane $3$-regular simple
  undirected graph $G$ as input, returns a reassembling $(G,\B)$ in
  time $\bigOO{n}$ such that $\alpha(G,\B) \leqslant 2 k$, where
  $k = \OutPlan{E}{G}$ and $n = \size{\vertices{G}}$.
\end{theorem}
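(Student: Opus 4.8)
The plan is to let the edge-outerplanar layering of $G$ itself drive the reassembling. First I would compute the layers $K_0,K_1,\ldots,K_{k-1}$ of Definition~\ref{def:edge-outerplanarity} by repeatedly reading off the edges on the outer face and deleting them; for a plane graph this peeling runs in $\bigOO{n}$ total time and exhibits $G$ as $k$ nested ``rings'' of edges, with $K_0$ the outermost. Since $G$ is $3$-regular and (we may assume) $\undirect{G}$ is biconnected, so that the outer face of each peeled graph is bounded by cycles, the three edges at any vertex fall into only a few consecutive rings, and the edges joining consecutive rings play the role of ``radial'' connectors.

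The reassembling $\B$ I would build deliberately avoids a single sweep around the outer cycle, which would accumulate one radial edge per sector and drive the boundary up linearly. Instead I would reassemble \emph{radially first, circumferentially second}: partition the vertices into radial segments, each a short path crossing the rings from outside to inside at one angular position; first merge the vertices within each segment (innermost to outermost), and then merge neighboring segments into progressively longer circular arcs, finishing when the arc closes up into all of $\vertices{G}$. Concretely this is a caterpillar-like tree whose internal nodes are exactly the partial radial segments and the circular arcs of whole segments.

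The heart of the argument is the edge-boundary bound. Every node $S$ of $\B$ is a planar region bounded by (at most) two radial cuts, its clockwise and counterclockwise sides. A radial cut meets each of the $k$ rings at most once and so severs at most $k$ edges; with two sides this gives $\bbridges{G}{S}\leqslant 2k$, which is exactly the required $\alpha(G,\B)\leqslant 2k$. For a partial segment consisting of the outer $m+1\leqslant k$ vertices of a radial path, the count is left side $+$ right side $+$ one inward continuing edge, namely $\leqslant 2(m+1)+1\leqslant 2k-1$, while a full segment (no inward edge) and a full arc meet the bound with equality. I would check this first on the concrete case of nested cycles joined by rungs (e.g. the prism, where $k=2$ and every whole-segment arc has boundary exactly $4$) before stating the general count.

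The step I expect to be the main obstacle is making the ``radial segment / ring'' picture rigorous when the peeled graphs are not biconnected and the layers $K_i$ are not single cycles but unions of cycles with attached paths: there the notion of a clean radial cut meeting each ring once must be replaced by an argument that the separating curve between a reassembled region and the rest can always be routed through faces so as to cross each of the $k$ layers in at most two edges. Carrying this out, together with checking that the segment/arc bookkeeping stays within linear time, is where the real work lies; the $3$-regularity and the Euler bound $m\leqslant 6n-12$ already used in Lemma~\ref{lem:basic-for-plane} keep every count linear in $n$.
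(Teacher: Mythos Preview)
The paper does not contain a proof of Theorem~\ref{thm:about-reassembling}. Immediately after the statement it says: ``Theorem~\ref{thm:about-reassembling} and its proof are in the report~\cite{kfoury+sisson2018},'' and the result is invoked here only as one of the two imported ingredients of Section~\ref{sect:two-previous}. So there is no in-paper argument to compare your proposal against.

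On its own merits, your plan has the right shape: peel the $k$ edge layers in $\bigOO{n}$ time, organise the vertices into radial ``spokes,'' assemble each spoke first and then sweep circumferentially, and bound every intermediate node's edge boundary by two radial cuts of at most $k$ edges each. This is indeed the geometric picture behind the $2k$ bound, and your prism sanity check is apt. Two cautions, though. First, your arithmetic for a partial segment, $2(m{+}1)+1\leqslant 2k{-}1$, only holds when $m{+}1\leqslant k{-}1$; if a spoke has exactly $k$ vertices and you have merged all but the last, the count is $2(k{-}1)+1=2k{-}1$, fine, but you should state explicitly that ``partial'' means strictly fewer than the full spoke. Second, and more seriously, the step you yourself flag as the obstacle is the entire content of the theorem: in a general $3$-regular plane graph the layers $K_i$ need not be single cycles, spokes need not have one vertex per layer, and ``a radial cut meets each ring at most once'' is an assertion that requires a careful Jordan-curve argument, not a definition. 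Until that routing lemma is actually proved, what you have is a correct heuristic for the nested-cycles case and a plausible outline for the general case, which is fair to call a proof \emph{sketch} but not yet a proof.
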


The value of $\alpha(G,\B)$ returned by the algorithm in
Theorem~\ref{thm:about-reassembling} is independent of $n$; more
precisely, for a fixed $k = \OutPlan{E}{G}$, the value of $n$ can be
arbitrarily large. Note that the algorithm in the theorem only returns
an upper bound $2k$ on $\alpha(G,\B)$ and does not claim that
$\alpha(G,\B)$ is optimal.

Theorem~\ref{thm:about-reassembling} and its proof are in the
report~\cite{kfoury+sisson2018}, which also discusses conditions under
which the bound $2k$ is optimal; specifically, it defines families of
plane $3$-regular simple graphs such that, for any graph $G$ in these
families, $2k$ is the value of an optimal $\alpha(G,\B)$. We do not
use the latter fact in this paper.

\bigskip % \medskip

The second result below (Theorem~\ref{thm:about-typing}) is about flow
networks and what are called \emph{network typings}. It is better
stated in terms of what we here call \emph{extended flow networks},
which have an upper bound function on edges $\upperB$, a lower bound
function on edges $\lowerB$, a set of source vertices $S$, and a set
of sink vertices $T$.

 \paragraph{Extended Flow Networks and their Typings.}\label{def:extended-network}
 An \emph{extend flow network} is denoted by a quintuple of the form
 $(G,\upperB,\lowerB,S,T)$ where $G$ is a directed graph
 satisfying~\nameref{assumption} on page~\pageref{assumption} and:
\begin{itemize}[itemsep=1pt,parsep=2pt,topsep=2pt,partopsep=0pt]
  \item $\upperB : \edges{G}\to\nreals$ and $\lowerB : \edges{G}\to\nreals$,
        with $0\leqslant \lowerB(e) \leqslant \upperB(e)$ for every
        $e\in \edges{G}$, and
  \item $\varnothing \neq S\subseteq\vertices{G}$ and
        $\varnothing \neq T\subseteq\vertices{G}$, with
        $S\cap T = \varnothing$.
\end{itemize}
As usual, a \emph{flow} in the network is a function
$f:\edges{G}\to\nreals$. A flow $f$ is \emph{feasible} iff
$\lowerB(e)\leqslant f(e) \leqslant\upperB(e)$
for every $e\in\edges{G}$ and $f$ satisfies flow conservation at every
vertex $v\in \vertices{G} - (S\cup T)$.

An \emph{input-output assignment} (or an \emph{IO assignment}) for
such a network is a function $g:S\cup T\to\nreals$, which expresses the
\emph{excess flow} entering $S$ and exiting $T$.
A \emph{typing} for such a network is a map $\tau$ such that:
\begin{alignat*}{5}
   & \tau : \power{S\cup T}\to\intervals{\reals}\quad\text{where\ }
   \\[1ex]
   & \power{S\cup T} := \SET{\, A \;\big|\; A\subseteq S\cup T\,}
     \text{\ \ and\ \ } \intervals{\reals} :=
     \SET{\,[r_1,r_2]\;\big|\; r_1,r_2\in\reals\text{ and } r_1\leqslant r_2\,}, 
\end{alignat*}
\ie, $\intervals{\reals}$ is the set of bounded closed intervals of reals;
such a typing must satisfy certain soundness conditions (not spelled out here).
An \emph{IO assignment $g$ satisfies the typing $\tau$} iff for every
$A\in\power{S\cup T}$:
\[
      \Big(\sum g (A\cap S) - \sum g (A\cap T)\Big)\ \in\ \tau(A)
\]
where $\sum g (X)$ means $\sum \Set{ g(x)\,|\,x\in X }$.%
   \footnote{By convention, $\sum \varnothing = 0$.}
In particular, if $A = S\cup T$ and $\tau(A) = [r_1,r_2]$, then:
\[
   r_1\ \leqslant\ \sum g(S) - \sum g(T)\ \leqslant\ r_2  .
\]
Hence, one condition for the soundness of the
typing $\tau$ is that we must have $r_1 = r_2 = 0$ when $A = S\cup T$,
\ie, $\tau(S\cup T) = [0,0] = \Set{0}$, 
expressing the fact that the flow entering the network must
equal the flow exiting it.

Given a flow $f:\edges{G}\to\nreals$,
it induces an IO assignment $\induce{f} : S\cup T\to\nreals$ as follows:
\begin{alignat*}{5}
   & \text{for every $s\in S$,}
   \\
   & \induce{f}(s) :=
     \sum\SET{\,f(e)\;\big|\;e=\set{s\,v}\text{ for some } v\in\vertices{G}\,}
     -
     \sum\SET{\,f(e)\;\big|\;e=\set{v\,s}\text{ for some } v\in\vertices{G}\,},
   \\
   & \text{for every $t\in T$,}
   \\
   & \induce{f}(t) :=
     \sum\SET{\,f(e)\;\big|\;e=\set{v\,t}\text{ for some } v\in\vertices{G}\,}
     -
     \sum\SET{\,f(e)\;\big|\;e=\set{t\,v}\text{ for some } v\in\vertices{G}\,}.
\end{alignat*}
\ie, $\induce{f}(s)$ is the total excess flow entering the source $s$ and
$\induce{f}(t)$ is the total excess flow exiting the sink $t$. Thus,
$\sum\induce{f}(S)$ and $\sum\induce{f}(T)$ are the total flows entering
and exiting the network.

As noted in the opening paragraph of this section,
a \emph{reassembling} $\B$ can be defined equally well for a
\emph{directed} graph $G$ satisfying~\nameref{assumption} and
containing no two-edge cycles. This allows us to use $(G,\B)$ and its
measure $\alpha(G,\B)$ in the statement of the next theorem.

\begin{theorem}
\label{thm:about-typing}
If $(G,\upperB,\lowerB,S,T)$ is an extended flow network as defined above
and $(G,\B)$ is a reassembling of the underlying $G$, then we can compute in time
$m\cdot 2^{\bigOO{\delta}}$ a typing $\tau : \power{S\cup T}\to\intervals{\reals}$,
where $m = \size{\edges{G}}$ and
$\delta = \max\,\SET{\,\alpha(G,\B),\,\size{S\cup T}\,}$, such that:
\begin{enumerate}[itemsep=1pt,parsep=2pt,topsep=2pt,partopsep=0pt] 
  \item If $f:\edges{G}\to\nreals$ is a feasible flow, then
        $\induce{f} : S\cup T\to\nreals$ satisfies $\tau$.
  \item If $g : S\cup T\to\nreals$ % is an IO assignment satisfying
        satisfies $\tau$, then there is a feasible flow
        $f:\edges{G}\to\nreals$ such that $\induce{f} = g$.
\end{enumerate}
In particular, the typing $\tau$ is such that $\tau(S) = [r_1,r_2]$
and $\tau(T) = [-r_2,-r_1]$ for some $r_1,r_2\in\nreals$, with $r_1$ and
$r_2$ being, respectively, the minimum value and the maximum value of
feasible flows in the network.
\end{theorem}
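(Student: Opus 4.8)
The plan is to compute $\tau$ by a bottom-up dynamic program over the reassembling tree $\B$, maintaining at each node a partial typing for the subnetwork it represents, and to establish parts 1 and 2 by induction on $\B$. For a node $u$ with vertex set $V_u\subseteq\vertices{G}$, let $N_u$ be the subnetwork induced on $V_u$ together with the \emph{dangling} half-edges crossing the boundary of $V_u$; its \emph{interface} $P_u$ consists of the terminals in $V_u$, namely $(S\cup T)\cap V_u$, together with the boundary edges of $V_u$, whose number is the edge-boundary degree of $u$. Hence $\size{P_u}\le\alpha(G,\B)+\size{S\cup T}\le 2\delta$. An interface assignment fixes the excess at each terminal of $V_u$ and the flow across each boundary edge (signed by its orientation relative to $V_u$); it is \emph{feasible} when induced by a flow on $N_u$ respecting $\lowerB,\upperB$ on every edge and conservation at every internal non-terminal vertex. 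I let $\tau_u:\power{P_u}\to\intervals{\reals}$ record, for each $A\subseteq P_u$, the range of the signed subset sum over $A$ as the feasible interface assignments vary. At a leaf $u=\Set{v}$ the subnetwork is just $v$ with its incident edges as boundary edges, so each $\tau_u(A)$ is a finite optimization over at most $\degr{}{v}$ edge variables constrained by their capacity boxes and, if $v\notin S\cup T$, one conservation equation; as $\degr{}{v}$ is at most the leaf's edge-boundary degree, hence at most $\delta$, this costs $2^{\bigOO{\delta}}$.

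The inductive step computes $\tau_u$ at an internal node from the typings $\tau_{u_1},\tau_{u_2}$ of its children. The boundary edges joining $V_{u_1}$ to $V_{u_2}$ become internal in $N_u$ and must be \emph{spliced}, identifying the two half-edge flow values they receive from either side; the surviving boundary edges form $P_u$. Thus $\tau_u$ arises from $\tau_{u_1}$ and $\tau_{u_2}$ by a \emph{connection} that enforces equality of flow on each spliced edge, followed by a \emph{projection} eliminating the spliced-edge variables. Concretely, for each of the $2^{\size{P_u}}=2^{\bigOO{\delta}}$ subsets $A\subseteq P_u$ one recomputes the range of its signed sum by optimizing over the ways the two children's recorded constraints can agree on the spliced edges; this is $2^{\bigOO{\delta}}$ work per node, and over the $\bigOO{n}$ nodes of $\B$ the total is $\bigOO{n}\cdot 2^{\bigOO{\delta}}\le m\cdot 2^{\bigOO{\delta}}$, as claimed.

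The hard part is proving that the \emph{subset-sum interval form is exactly preserved} by connection and projection, i.e.\ that parts 1 and 2 survive each merge. Soundness (part 1) is the easy direction: a feasible flow on $N_u$ restricts to feasible flows on $N_{u_1},N_{u_2}$ that agree on the spliced edges, so its interface assignment meets every recorded interval. Completeness (part 2) is delicate: from an interface assignment satisfying \emph{all} subset-sum intervals of $\tau_u$ one must reconstruct feasible flows on the two children that agree on every spliced edge and induce the prescribed interface values. This is a flow-gluing argument, and here the special structure of flow feasibility is indispensable: the set of feasible interface assignments of a flow network is a signed, terminal-restricted \emph{base polytope} (generalized polymatroid), characterized exactly by the Gale--Hoffman cut inequalities, and it is this polymatroidal structure that makes the two-sided subset-sum intervals a \emph{complete} rather than merely sound description, and that is closed under connect-and-project. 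Establishing this closure --- that connecting two such polymatroidal typings and projecting out the shared coordinates again yields one recoverable from its subset ranges --- is the technical core of the theorem.

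Finally, at the root every edge is internal, so $P_{\text{root}}=S\cup T$ and $\tau_{\text{root}}=\tau$ is the sought typing on $\power{S\cup T}$; parts 1 and 2 for the whole network are the root instance of the induction. The closing claim then reads off directly. Taking $A=S$ gives $\sum g(S)-\sum g(S\cap T)=\sum g(S)$, the total excess flow entering the sources, which is the value $\size{f}$ of a realizing flow; hence $\tau(S)=[r_1,r_2]$ with $r_1,r_2$ the minimum and maximum feasible flow values. Taking $A=T$ gives $-\sum g(T)$, and combined with the conservation identity $\sum g(S)=\sum g(T)$ forced by the soundness requirement $\tau(S\cup T)=\Set{0}$, this yields $\tau(T)=[-r_2,-r_1]$.
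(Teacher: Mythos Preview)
The paper does not actually prove this theorem: immediately after stating it, the text says ``Theorem~\ref{thm:about-typing} and its proof are in the report~\cite[Theorem 4 on pp.~7--8]{kfoury2018},'' and moves on. So there is no in-paper argument to compare against; the theorem is imported as a black box.

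That said, your sketch is a faithful reconstruction of the compositional typing method those earlier reports develop, and is consistent with everything the present paper says informally about typings and reassemblings. The bottom-up dynamic program over $\B$, with interfaces consisting of boundary edges plus terminals in $V_u$, the connect-then-project merge at internal nodes, and the identification of the hard direction (completeness under merging) with the polymatroidal/Gale--Hoffman structure of feasible-flow polytopes, are exactly the ingredients one expects. Your interface-size bound $\size{P_u}\le\alpha(G,\B)+\size{S\cup T}\le 2\delta$ and the resulting $2^{\bigOO{\delta}}$ cost per node are correct, and the passage from $\bigOO{n}$ nodes to the stated $m\cdot 2^{\bigOO{\delta}}$ bound is fine since $n=\bigOO{m}$ under~\nameref{assumption}.

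One point worth tightening if you write this out in full: at the merge step you assert ``$2^{\bigOO{\delta}}$ work per node'' by appealing to the polymatroid closure, but you should make explicit that computing each entry $\tau_u(A)$ from $\tau_{u_1},\tau_{u_2}$ reduces to a minimization/maximization over subsets of the spliced edges (or, equivalently, to a submodular-function evaluation expressible via the children's recorded intervals), and that this is itself $2^{\bigOO{\delta}}$ because the children's interfaces are also of size $\bigOO{\delta}$. Without that, the per-node cost is only asserted, not argued. This is the one place where your sketch leans on ``the technical core'' without saying what the actual computation is.
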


Theorem~\ref{thm:about-typing} and its proof are in the
report~\cite[Theorem 4 on pp. 7-8]{kfoury2018},
which examines other aspects of network typings and their applications.%
   \footnote{There are minor differences between the terminology in this
   paper and the terminology in the report~\cite{kfoury2018}. What is called
   a \emph{binding schedule} $\sigma$ of a graph $G$ and its $\Index{\sigma}$
   in that report are here a \emph{reassembling} $(G,\B)$
   and its measure $\alpha(G,\B)$. }

For a simpler presentation of our main result
(Theorem~\ref{thm:our-result} below), we use
Theorem~\ref{thm:about-typing} with the following restrictions:
$S = \Set{s}$ and $T = \Set{t}$ are singleton sets, and the lower
bound $\lowerB(e) = 0$ for every $e\in \edges{G}$. With these
restrictions, the definition of a network as a quintuple
$(G,\upperB,\lowerB,S,T)$ in Theorem~\ref{thm:about-typing}
matches the definition of a network as a quadruple in
Section~\ref{sect:preliminaries}.  But these restrictions can be
lifted and our result % the resuTheorem~\ref{thm:our-result}
re-stated in a more general setting, as in
Theorem~\ref{thm:our-result-extended} below.

%% END two-previous.tex

\section{The Main Result}
\label{sect:our-result}
 %% BEGIN our-result.tex

We first state and prove the result which is this paper's title, and
then explain how it generalizes to \emph{extended flow
networks} as defined in Sectione~\ref{sect:our-result}. The
time complexity in Theorem~\ref{thm:our-result}
% , and again in Theorem~\ref{thm:our-result-extended},
can be written as $\bigOO{n\cdot f(k)}$ where $k$ is an edge-outerplanarity,
$n$ a number of vertices, and $f(k)$ a function of $k$
independent of $n$ -- which thus makes the algorithm in
Theorem~\ref{thm:our-result} `fixed-parameter linear-time' where $k$
is the parameter to keep fixed.

\Hide{
Given a flow network $(G,c,s,t)$ as input where $G$ is plane, our
algorithm returns the value of a max flow in time $f(k)\cdot \bigOO{n}$, where
$k = \OutPlan{E}{G}$ and $n = \size{\vertices{G}}$, for some function $f$.
}

\begin{theorem}
\label{thm:our-result}
There is a fixed-parameter linear-time algorithm to compute the value of
a max flow in plane flow networks $(G,c,s,t)$ where the parameter bound
not to be exceeded is $k=\OutPlan{E}{G}$.
\end{theorem}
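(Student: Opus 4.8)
The plan is to chain the three elementary lemmas of Section~\ref{sect:transformation} together with the two imported results, Theorem~\ref{thm:about-reassembling} and Theorem~\ref{thm:about-typing}, so that each supplies exactly the hypotheses the next one requires. Given a plane flow network $(G,c,s,t)$ with $k = \OutPlan{E}{G}$ and $n = \size{\vertices{G}}$, the algorithm would proceed in three phases: (i) transform $(G,c,s,t)$ into the derived network $(\transA{G},\transA{c},\transA{s},\transA{t})$; (ii) build a reassembling $\B$ of $\transA{G}$ of small $\alpha$-measure; and (iii) compute a typing of the extended flow network carried by $\transA{G}$ and $\B$, then read the max-flow value off that typing.

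For phase (i) I would invoke Lemma~\ref{lem:equivalent-networks}: the transformation runs in time $\bigOO{n}$ and yields an equivalent flow network, so the max-flow value of $(\transA{G},\transA{c},\transA{s},\transA{t})$ equals that of the original $(G,c,s,t)$. By Lemma~\ref{lem:basic-for-plane}, $\transA{G}$ is a plane directed graph satisfying~\nameref{assumption}, is $3$-regular, contains no two-edge cycles, has $\OutPlan{E}{\transA{G}} = k$, and has $\size{\vertices{\transA{G}}} \leqslant 13n-24$ and $\size{\edges{\transA{G}}} \leqslant 18n-36$, both $\bigOO{n}$. Because $\transA{G}$ has no two-edge cycles, its undirected version $\undirect{\transA{G}}$ is a \emph{simple} plane $3$-regular graph whose edges are in bijection with those of $\transA{G}$, and $\OutPlan{E}{\undirect{\transA{G}}} = k$ as well.

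For phase (ii) I would apply Theorem~\ref{thm:about-reassembling} to $\undirect{\transA{G}}$, obtaining in time $\bigOO{n}$ a reassembling $(\undirect{\transA{G}},\B)$ with $\alpha(\undirect{\transA{G}},\B) \leqslant 2k$. Since $\B$ is a structure on vertex sets alone, it is simultaneously a reassembling of the directed $\transA{G}$; and because edges of $\transA{G}$ and $\undirect{\transA{G}}$ correspond one-to-one, every node's edge-boundary degree is unchanged, so $\alpha(\transA{G},\B) = \alpha(\undirect{\transA{G}},\B) \leqslant 2k$. For phase (iii) I would view $(\transA{G},\transA{c},\transA{s},\transA{t})$ as an extended flow network with $\upperB = \transA{c}$, $\lowerB \equiv 0$, $S = \Set{\transA{s}}$, $T = \Set{\transA{t}}$, and feed it with $\B$ into Theorem~\ref{thm:about-typing}. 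Here $\delta = \max\Set{\alpha(\transA{G},\B),\size{S\cup T}} = \max\Set{2k,2} = 2k$, so the typing $\tau$ is computed in time $\size{\edges{\transA{G}}}\cdot 2^{\bigOO{\delta}} = \bigOO{n}\cdot 2^{\bigOO{k}}$. As $\lowerB \equiv 0$ makes the zero flow feasible, the theorem's last clause gives $\tau(S) = [0,r_2]$ with $r_2$ the maximum value of a feasible flow; this $r_2$ is the sought value, and by the equivalence from phase (i) it is also the max-flow value of $(G,c,s,t)$. Summing the phases, the total time is $\bigOO{n} + \bigOO{n} + \bigOO{n}\cdot 2^{\bigOO{k}} = \bigOO{n\cdot f(k)}$ with $f(k) = 2^{\bigOO{k}}$, which is fixed-parameter linear in $n$.

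The main obstacle is not any single calculation but the bookkeeping that makes the two imported results mesh: Theorem~\ref{thm:about-reassembling} is stated for \emph{simple undirected} $3$-regular plane graphs, whereas Theorem~\ref{thm:about-typing} and the flow problem live on \emph{directed} graphs. The transformation $G\mapsto\transA{G}$ is precisely the device reconciling them, forcing $3$-regularity and killing two-edge cycles so that $\undirect{\transA{G}}$ is simple; the two facts to check with care are that $\OutPlan{E}{}$ is preserved (Lemma~\ref{lem:basic-for-plane}, part~3.c) and that $\B$ together with its $\alpha$-measure transfers verbatim from $\undirect{\transA{G}}$ back to the directed $\transA{G}$. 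Once those identifications are in place, the parameter $\delta$ of Theorem~\ref{thm:about-typing} is pinned to $2k$, so the exponential-in-$\delta$ factor becomes the fixed-parameter factor $f(k)$ while every size stays $\bigOO{n}$.
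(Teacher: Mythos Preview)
Your proposal is correct and follows essentially the same approach as the paper's own proof: transform to $(\transA{G},\transA{c},\transA{s},\transA{t})$ via Lemmas~\ref{lem:basic-for-plane} and~\ref{lem:equivalent-networks}, build a reassembling via Theorem~\ref{thm:about-reassembling}, and extract the max-flow value from the typing of Theorem~\ref{thm:about-typing}. Your write-up is in fact more careful than the paper's in making explicit the passage through $\undirect{\transA{G}}$ and the computation $\delta = \max\Set{\alpha(\transA{G},\B),\size{S\cup T}}$; the only cosmetic slip is writing $\delta = \max\Set{2k,2}$ as an equality rather than an inequality (since $\alpha(\transA{G},\B) \leqslant 2k$, not necessarily $=2k$), but this does not affect the bound $2^{\bigOO{\delta}} = 2^{\bigOO{k}}$.
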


\begin{proof}
We can assume the underlying graph $G$ satisfies~\nameref{assumption}
on page~\pageref{assumption}. First, we carry out the transformation
$(G,c,s,t)\mapsto (\transA{G},\transA{c},\transA{s},\transA{t})$
in time $\bigOO{n}$ where $n = \size{\vertices{G}}$,
as described in Lemma~\ref{lem:equivalent-networks},
also according to which $(G,c,s,t)$ and
$(\transA{G},\transA{c},\transA{s},\transA{t})$ are equivalent networks.
According to Lemma~\ref{lem:basic-for-plane}, we have
$\OutPlan{E}{G} = \OutPlan{E}{\transA{G}} = k$ as well as
$\size{\vertices{\transA{G}}} = \bigOO{n}$ and
$\size{\edges{\transA{G}}} = \bigOO{n}$.

To obtain the stated result, it now suffices to apply
Theorems~\ref{thm:about-reassembling} and~\ref{thm:about-typing} to
the transformed network
$(\transA{G},\transA{c},\transA{s},\transA{t})$. In time $\bigOO{n}$,
we first compute a reassembling $\B$ of $\transA{G}$ such that
$\delta = \alpha(\transA{G},\B) \leqslant 2k$, and then compute a typing
$\tau : \power{\Set{\transA{s},\transA{t}}}\to\intervals{\reals}$ in
time $m\cdot 2^{\bigOO{\delta}}$ where $m = \size{\edges{\transA{G}}}= \bigOO{n}$.
If $\tau(\Set{\transA{s}})= [0,r]$ for
some $r\in\reals$, then $r$ is the value of a max flow. The claimed time
complexity follows.
\end{proof}

\paragraph{Remark.}\label{rem:value-of-max-flow}
% \textbf{Remark.}
It is important to note that what is returned by the
algorithm in Theorem~\ref{thm:our-result} is the \emph{value} $r$ of a
max flow, not a \emph{particular} max flow $f:\edges{G}\to\nreals$
such that $\size{f} = r$. It is an additional problem, not considered
in this paper but worthy of study, to compute a particular max flow
$f:\edges{G}\to\nreals$ given that its value $\size{f}$ must be
$r$. While the value $r$ is unique, there are generally many max flows
$f$ such that $\size{f} = r$.

The next result implies the preceding Theorem~\ref{thm:our-result} and
illustrates the flexibility of our method.
Theorem~\ref{thm:our-result-extended} is about 
\emph{extended flow networks}, each of the form $(G,\upperB,\lowerB,S,T)$
where the graph $G$ is a plane directed graph
satisfying~\nameref{assumption} and the extra assumption that
$\size{S\cup T} = \bigOO{k}$ where $k = \OutPlan{E}{G}$. 
A typing $\tau : \power{S\cup T}\to\intervals{\reals}$ for such a
network includes an interval for each $A\in\power{S\cup T}$;
with the extra assumption, the typing has size $2^{\bigOO{k}}$. We impose
the extra assumption in order to keep the complexity linear in
$n = \bigOO{\size{\vertices{G}}}$, though exponential in the parameter
$k = \OutPlan{E}{G}$. 

\begin{theorem}
\label{thm:our-result-extended}

There is a fixed-parameter linear-time algorithm which, given
a plane extended flow network $(G,\upperB,\lowerB,S,T)$ as described in
the preceding paragraph, computes for every $A\in\power{S\cup T}$
a bounded closed interval $[r_1,r_2]$ of reals such that for every feasible
flow $f : \edges{G}\to\nreals$ it holds that:
\[
      r_1\ \leqslant\ \sum \induce{f}(A\cap S) - \sum \induce{f}(A\cap T)
      \ \leqslant\ r_2 .
\]
In particular, if $A = S$, then $r_2$ is the value of a max flow in the extended
network, which is simultaneously returned with the value $r_1$ of a min flow at
no extra cost. The fixed parameter not to be exceeded for the algorithm to work
as claimed is $k = \OutPlan{E}{G}$. 
\end{theorem}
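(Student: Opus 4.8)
The plan is to run the same pipeline as in the proof of Theorem~\ref{thm:our-result}, but with the capacity/single-source transformation of Lemma~\ref{lem:equivalent-networks} upgraded to the extended setting. Since $G$ is given to satisfy~\nameref{assumption}, I would first define the transformed extended network $(\transA{G}, \transA{\upperB}, \transA{\lowerB}, \transA{S}, \transA{T})$. The graph transformation $G \mapsto \transA{G}$ is exactly the one already defined. For the bound functions I would set $\transA{\upperB}(e) = \upperB(e)$ and $\transA{\lowerB}(e) = \lowerB(e)$ on the preserved edges $e \in \edges{G} \subseteq \edges{\transA{G}}$, and assign to each new edge $e \in \edges{\transA{G}} - \edges{G}$ a very large upper bound together with lower bound $0$, so the expansion cycles impose no constraint of their own. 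For the terminals I would, for each $s \in S$, set $\transA{s} := s$ if $s$ is not expanded and otherwise pick any one vertex on the cycle replacing $s$, and dually for each $t \in T$; this yields a bijection $S \cup T \to \transA{S} \cup \transA{T}$ with $\size{\transA{S} \cup \transA{T}} = \size{S \cup T} = \bigOO{k}$ and $\transA{S} \cap \transA{T} = \varnothing$.

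Because the graph transformation is purely combinatorial, Lemma~\ref{lem:basic-for-plane} applies unchanged: $\transA{G}$ is a plane $3$-regular directed graph satisfying~\nameref{assumption} with no two-edge cycles, with $\OutPlan{E}{\transA{G}} = \OutPlan{E}{G} = k$ and $\size{\vertices{\transA{G}}}, \size{\edges{\transA{G}}} = \bigOO{n}$. Since $\transA{G}$ has no two-edge cycles, its undirected version $\undirect{\transA{G}}$ is a plane $3$-regular \emph{simple} undirected graph on the same vertex set, so Theorem~\ref{thm:about-reassembling} produces in time $\bigOO{n}$ a reassembling $\B$ with $\alpha(\transA{G}, \B) \leqslant 2k$ (the edge-boundary degrees in the directed and undirected versions coincide, as the edges are in bijection). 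I would then invoke Theorem~\ref{thm:about-typing} on $(\transA{G}, \transA{\upperB}, \transA{\lowerB}, \transA{S}, \transA{T})$ with this $\B$, obtaining the full typing $\tau : \power{\transA{S} \cup \transA{T}} \to \intervals{\reals}$ (that is, all $2^{\bigOO{k}}$ intervals) in time $m \cdot 2^{\bigOO{\delta}}$, where $m = \size{\edges{\transA{G}}} = \bigOO{n}$ and $\delta = \max\Set{\alpha(\transA{G},\B), \size{\transA{S} \cup \transA{T}}} = \bigOO{k}$; the total cost is therefore $\bigOO{n} \cdot 2^{\bigOO{k}}$, fixed-parameter linear in $n$ with parameter $k$. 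Transporting each $A \subseteq S \cup T$ to its image $A' \subseteq \transA{S} \cup \transA{T}$ under the bijection, the interval $\tau(A')$ is the claimed $[r_1, r_2]$, and the ``in particular'' clause of Theorem~\ref{thm:about-typing} (applied with $A' = \transA{S}$) gives that $r_2$ and $r_1$ are the max- and min-flow values.

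The step that actually needs care — and which I expect to be the main obstacle — is the extended analogue of part~2 of Lemma~\ref{lem:equivalent-networks}: that $(G, \upperB, \lowerB, S, T)$ and $(\transA{G}, \transA{\upperB}, \transA{\lowerB}, \transA{S}, \transA{T})$ are equivalent \emph{as extended networks}, in the strong sense that feasible flows correspond under restriction/extension \emph{and} the induced IO assignments agree through the bijection, i.e. $\induce{\transA{f}}(\transA{x}) = \induce{f}(x)$ for every $x \in S \cup T$. The key observation is that within each expansion cycle only the chosen vertex $\transA{x}$ is a terminal, while every other cycle vertex is an internal conservation vertex; hence the total excess of the gadget is concentrated at $\transA{x}$ and equals the net flow on the original incident edges $\Set{e_1,\ldots,e_p}$, which is exactly $\induce{f}(x)$. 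The large upper bounds and zero lower bounds on the cycle edges guarantee that any feasible flow on the original edges can be routed around each cycle to realize this, and conversely the restriction of a feasible $\transA{f}$ to $\edges{G}$ is feasible with the same excesses. Once this correspondence is established the result transfers immediately, and the remaining verifications (planarity and outerplanarity invariance, running times) are already supplied by Lemmas~\ref{lem:basic-for-plane} and~\ref{lem:equivalent-networks}.
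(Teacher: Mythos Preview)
Your proposal is correct and follows essentially the same route as the paper's own sketch: transform to $(\transA{G},\transA{\upperB},\transA{\lowerB},\transA{S},\transA{T})$ with $\transA{\lowerB}(e)=0$ and $\transA{\upperB}(e)$ very large on the new edges, then apply Lemma~\ref{lem:basic-for-plane}, Theorem~\ref{thm:about-reassembling}, and Theorem~\ref{thm:about-typing} in sequence. You are in fact more explicit than the paper about the bijection $S\cup T\to\transA{S}\cup\transA{T}$ and the preservation of induced IO assignments across the expansion gadgets, points which the paper simply folds into ``Details omitted.''
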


\begin{sketch}
This is a minor variation on the proof of Theorem~\ref{thm:our-result}.
The algorithm starts with the transformation
$(G,\upperB,\lowerB,S,T)\mapsto (\transA{G},\transA{\upperB},\transA{\lowerB},
\transA{S},\transA{T})$ in time $\bigOO{n}$, which is carried out just like the
transformation $(G,c,s,t)\mapsto (\transA{G},\transA{c},\transA{s},\transA{t})$.
One subtle point here: For every new edge $e$
introduced in the transformation $G\mapsto\transA{G}$, we make
$\transA{\lowerB}(e) := 0$
just as we make $\transA{\upperB}(e) := \text{`a very large capacity'}$,
in this way the capacities on the new edges have no effect in resticting
the flow in the transformed network.
The rest of the proof proceeds like the proof of Theorem~\ref{thm:our-result}.
Details omitted. 
\end{sketch}

The same~\nameref{rem:value-of-max-flow} after
Theorem~\ref{thm:our-result} applies to Theorem~\ref{thm:our-result-extended}: 
What is returned by the algorithm are the \emph{values} $r_1$ of a min
flow and $r_2$ of a max flow, not a \emph{particular} min flow
$g:\edges{G}\to\nreals$ and not a \emph{particular} max flow
$f:\edges{G}\to\nreals$ such that $\size{g} = r_1$ and $\size{f} = r_2$.

Compare our result in
Theorem~\ref{thm:our-result-extended} with the main result
in~\cite{borradaile:2011}, where it is shown that
\emph{there exists an algorithm that solves the max-flow problem with
      multiple sources and multiple sinks in an $n$-vertex directed plane
      graph in $\bigOO{n {\log}^3 n}$ time} (with only upper bounds,
      no lower bounds, on edge capacities).

%% END our-result.tex

\section{Future Work}
\label{sect:future}
   %% BEGIN future.tex

The method proposed in this paper for computing the value of a
maximum flow in planar networks, in fixed-parameter linear time,
can be extended to other more general forms of
flows in planar networks without much trouble, where the parameter bound
not to be exceeded is again edge-outerplanarity.
Under preparation are the four following extensions:
\begin{itemize}[itemsep=1pt,parsep=2pt,topsep=2pt,partopsep=0pt]  
\item
  \emph{multicommodity flows}
  (formal definitions in~\cite[Chapt. 17]{1993orlin}),
\item
  \emph{minimum-cost flows}, \emph{minimum-cost max flows}, and variations
  (definitions in~\cite[Chapt. 9-11]{1993orlin}),
\item  
  \emph{flows with multiplicative gains and losses}, also called
  \emph{generalized flows} (definitions in~\cite[Chapt. 15]{1993orlin}),
\item  
  \emph{flows with additive gains and losses}
  (definitions in~\cite{brandenburg2011shortest}).  
\end{itemize}
To put the relevance of this work in sharper focus, there is no known
algorithm to compute a max flow in any of these four extensions in
linear time in general; in the case of the fourth extension (flows
with additive gains and losses), the problem is known to be
NP-hard~\cite{brandenburg2011shortest}.

We conclude with an open problem. In
the~\nameref{rem:value-of-max-flow} in Section~\ref{sect:our-result},
we pointed out that our method produces the \emph{value} of a maximum
flow, rather than a particular flow with that value, in contrast to
the many other approaches to the maximum-flow problem in the
extant literature.

% \begin{problem}
\paragraph{Open Problem.}
Let $(G,c,s,t)$ be an arbitrary plane flow network.
We can tackle the problem according to one of two approaches:
\begin{enumerate}[itemsep=1pt,parsep=2pt,topsep=2pt,partopsep=0pt]
\item Let the value $r$ of a max flow in $(G,c,s,t)$ 
  be given already. Can we determine in linear time
  a particular max flow $f : \edges{G}\to\nreals$
  such that $\size{f} = r$?
\end{enumerate}  
Alternatively:
\begin{itemize}[itemsep=1pt,parsep=2pt,topsep=2pt,partopsep=0pt]  
\item[2.]
  How can we extend our proposed method so that it simultaneously
  produces the value $r$ of a max flow in $(G,c,s,t)$ \emph{and} a particular
  max flow $f : \edges{G}\to\nreals$ such that $\size{f} = r$ in linear time?
\end{itemize}
A further qualification on the first approach above is
whether the determination of $f$ in linear time can be carried out
without reference to a fixed bound 
$k = \OutPlan{E}{G}$; if this is possible, it will be a stronger result.
In the second approach, since $r$ and $f$
are to be simultaneously determined, it will be a direct extension of our proposed
method which will therefore make explicit reference to
a fixed bound $k = \OutPlan{E}{G}$ for both $r$ and $f$.
% \end{problem}

%% END future.tex

\ifTR
  % \clearpage
\else
\fi

 % \vfill
 % \clearpage

{\footnotesize % \small
\addcontentsline{toc}{section}{References} 
\bibliographystyle{plain} % {alpha} % {siam} % {abbrv} % {plain} %
   % plain, abbrv, siam, alpha, are among dozens of other available styles
% \bibliography{./generic,./extra}
% \bibliography{../../generic,../../extra}
\bibliography{./fixed-parameter-linear-time}

\begin{thebibliography}{10}

\bibitem{1993orlin}
R.K. Ahuja, T.~L. Magnanti, and J.B. Orlin.
\newblock {\em {Network Flows: Theory, Algorithms, and Applications}}.
\newblock Prentice Hall, Englewood Cliffs, N.J., 1993.

\bibitem{bentz2009}
Cedric Bentz.
\newblock {Disjoint paths in sparse graphs}.
\newblock {\em Discrete Applied Mathematics}, 157(17):3558--3568, 2009.

\bibitem{BestKfoury:dsl11}
Azer Bestavros and Assaf Kfoury.
\newblock {A Domain-Specific Language for Incremental and Modular Design of
  Large-Scale Verifiably-Safe Flow Networks}.
\newblock In {\em Proc. of IFIP Working Conference on Domain-Specific Languages
  (DSL 2011), EPTCS Volume 66}, pages 24--47, {Sept} {2011}.

\bibitem{borradaile:2011}
Glencora Borradaile, Philip~N. Klein, Shay Mozes, Yahav Nussbaum, and Christian
  Wulff-Nilsen.
\newblock {Multiple-Source Multiple-Sink Maximum Flow in Directed Planar Graphs
  in Near-Linear Time}.
\newblock In {\em Proceedings of the 2011 IEEE 52Nd Annual Symposium on
  Foundations of Computer Science}, FOCS '11, pages 170--179, Washington, DC,
  USA, 2011. IEEE Computer Society.

\bibitem{brandenburg2011shortest}
Franz~J Brandenburg and Mao-cheng Cai.
\newblock {Shortest Path and Maximum Flow Problems in Networks with Additive
  Losses and Gains}.
\newblock {\em Theoretical Computer Science}, 412(4):391--401, 2011.

\bibitem{diestel2012}
Reinhard Diestel.
\newblock {\em Graph Theory}.
\newblock Springer Verlag, 2012.

\bibitem{hochbaum1993}
Dorit~S. Hochbaum.
\newblock {Why Should Biconnected Components Be Identified First}.
\newblock {\em Discrete Applied Mathematics}, 42(2):203 -- 210, 1993.

\bibitem{Kammer2007}
Frank Kammer.
\newblock {Determining the Smallest $k$ Such That $G$ Is $k$-Outerplanar}.
\newblock In Lars Arge, Michael Hoffmann, and Emo Welzl, editors, {\em Proc. of
  15th Annual European Symposium on Algorithms, ESA 2007}, pages 359--370. LNCS
  4698, Springer Verlag, September 2007.

\bibitem{Kfoury:sblp11}
Assaf Kfoury.
\newblock {The Denotational, Operational, and Static Semantics of a
  Domain-Specific Language for the Design of Flow Networks}.
\newblock In {\em Proc. of SBLP 2011: Brazilian Symposium on Programming
  Languages}, {Sept} {2011}.

\bibitem{Kfoury:SCP2014}
Assaf Kfoury.
\newblock {The Syntax and Semantics of a Domain-Specific Language for
  Flow-Network Design}.
\newblock {\em Science of Computer Programming}, 93({Part A}):19--38, November
  2014.

\bibitem{kfoury2018}
Assaf Kfoury.
\newblock {A Compositional Approach to Network Algorithms}.
\newblock {\em CoRR}, abs/1805.07491, 2018.
\newblock preprint, \url{http://arxiv.org/abs/1805.07491v1}.

\bibitem{kfoury+mirzaei:2017}
Assaf Kfoury and Saber Mirzaei.
\newblock {Efficient Reassembling of Graphs, Part 1: The Linear Case}.
\newblock {\em J. of Combinatorial Optimization}, 33(3):1057--1089, April 2017.

\bibitem{kfoury+sisson2018}
Assaf Kfoury and Benjamin Sisson.
\newblock {Efficient Reassembling of Three-Regular Planar Graphs}.
\newblock {\em CoRR}, abs/1807.03479, July 2018.
\newblock preprint, \url{https://arxiv.org/abs/1807.03479v1}.

\bibitem{kfoury+mirzaei:2017B}
Saber Mirzaei and Assaf Kfoury.
\newblock {Efficient Reassembling of Graphs, Part 2: The Balanced Case}.
\newblock {\em CoRR}, abs/1602.02863, 2016.
\newblock preprint, \url{http://arxiv.org/abs/1602.02863v1}.

\bibitem{nikolopoulos2007}
Stavros~D. Nikolopoulos and Leonidas Palios.
\newblock {On the Parallel Computation of the Biconnected and Strongly
  Connected Co-Components of Graphs}.
\newblock {\em Discrete Applied Mathematics}, 155(14):1858 -- 1877, 2007.
\newblock 3rd Cologne/ Twente Workshop on Graphs and Combinatorial
  Optimization.

\bibitem{patrignani2013}
Maurizio Patrignani.
\newblock {Planarity Testing and Embedding}.
\newblock In Roberto Tamassia, editor, {\em {Hanbook of Graph Drawing and
  Visualization}}, pages 1--42. CRC Press, Baton Rouge, FL, 2013.

\bibitem{SouleBestKfouryLapets:eoolt11}
Nate Soule, Azer Bestavros, Assaf Kfoury, and Andrei Lapets.
\newblock {Safe Compositional Equation-based Modeling of Constrained Flow
  Networks}.
\newblock In {\em Proc. of 4th Int'l Workshop on Equation-Based Object-Oriented
  Modeling Languages and Tools}, Z\"urich, September 2011.

\bibitem{tarjan1984}
Robert~Endre Tarjan and Uzi Vishkin.
\newblock {Finding Biconnected Components and Computing Tree Functions in
  Logarithmic Parallel Time (Extended Summary)}.
\newblock In {\em 25th Annual Symposium on Foundations of Computer Science,
  West Palm Beach, Florida, USA, 24-26 October 1984}, pages 12--20, 1984.

\end{thebibliography}
}

\ifTR
%  \clearpage
\else
\fi

\end{document}